\newcommand{\noun}[1]{\textsc{#1}}
\providecommand{\tabularnewline}{\\}
\numberwithin{equation}{section}
\numberwithin{figure}{section}
\newtheorem{thm}{Theorem}
  \newtheorem{example}[thm]{Example}
  \newtheorem{notation}[thm]{Notation}
  \newtheorem{defn}[thm]{Definition}
  \newtheorem{prop}[thm]{Proposition}
  \newtheorem{algorithm}[thm]{Algorithm}
  \newtheorem{condition}[thm]{Condition}
  \newtheorem{rem}[thm]{Remark}
\begin{document}

\title{CAD Adjacency Computation Using Validated Numerics}

\author{Adam Strzebo\'nski}

\address{Wolfram Research Inc., 100 Trade Centre Drive, Champaign, IL 61820,
U.S.A. }

\email{adams@wolfram.com}

\date{April 21, 2017}
\begin{abstract}
We present an algorithm for computation of cell adjacencies for well-based
cylindrical algebraic decomposition. Cell adjacency information can
be used to compute topological operations e.g. closure, boundary,
connected components, and topological properties e.g. homology groups.
Other applications include visualization and path planning. Our algorithm
determines cell adjacency information using validated numerical methods
similar to those used in CAD construction, thus computing CAD with
adjacency information in time comparable to that of computing CAD
without adjacency information. We report on implementation of the
algorithm and present empirical data.
\end{abstract}
\maketitle

\section{Introduction}

A semialgebraic set is a subset of $\mathbb{R}^{n}$ which is a solution
set of a system of polynomial equations and inequalities. Computation
with semialgebraic sets is one of the core subjects in computer algebra
and real algebraic geometry. A variety of algorithms have been developed
for real system solving, satisfiability checking, quantifier elimination,
optimization and other basic problems concerning semialgebraic sets
\cite{C,BPR,CJ,DSW,GV,HS,LW,R,T,W1}. Every semialgebraic set can
be represented as a finite union of disjoint cells bounded by graphs
of semialgebraic functions. The Cylindrical Algebraic Decomposition
(CAD) algorithm \cite{C,CJ,S7} can be used to compute a cell decomposition
of any semialgebraic set presented by a quantified system of polynomial
equations and inequalities. Alternative methods of computing cell
decompositions are given in \cite{CMXY,S11,S12}. For solving certain
problems, for instance computing topological properties or visualization,
it is not sufficient to know a cell decomposition of the set, but
it is also necessary to know how the cells are connected together.
\begin{example}
The CAD algorithm applied to the equation $y^{2}=x(x^{4}-1)$ gives
four one-dimensional cells and three zero-dimensional cells shown
in Figure \ref{fig}. To find the connected components of the solution
set it is sufficient to know which one-dimensional cells are adjacent
to which zero-dimensional cells.

\begin{figure}
\includegraphics[width=2\columnwidth, trim = 0mm 5mm 0mm 220mm, clip]{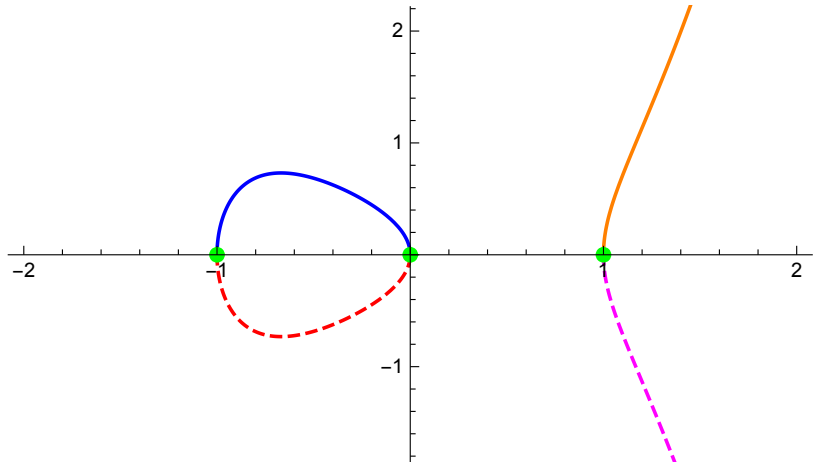}

\caption{\label{fig}$y^{2}=x(x^{4}-1)$}
\end{figure}

\end{example}
Several algorithms for computing cell adjacencies have been developed.
The algorithm given in \cite{SS} computes cell adjacencies for CAD
that are well-based. A CAD is well-based if none of the polynomials
whose roots appear in cell description vanishes identically at any
point. This is a somewhat stronger condition than well-orientedness
required for the McCallum projection \cite{MC2}, nevertheless a large
portion of examples that appear in practice satisfies the condition.
In a well-based CAD all cell adjacencies can be determined from adjacencies
between cells whose dimensions differ by one. In $\mathbb{R}^{2}$
all CADs are well-based. Algorithm for computing cell adjacencies
in arbitrary CADs in $\mathbb{R}^{3}$ has been given in \cite{ACM2}.
For determining cell adjacencies \cite{SS} proposes methods based
on fractional power series representations of polynomial roots. Another
method, given in \cite{MCC}, computes adjacencies between a zero-dimensional
cell and one-dimensional cells by analyzing intersections of the one
dimensional cells with sides of a suitable box around the zero-dimensional
cell. For an alternative method of computing connectivity properties
of semialgebraic sets see \cite{CA,BPR1,BPR,SSC}.

In this paper we present a new algorithm which computes cell adjacencies
for well-based CAD. The algorithm uses validated numerical methods
similar to those used in \cite{S7} for construction of CAD cell sample
points. The method is based on computation of approximations of polynomial
roots and increasing the precision of computations until validation
criteria are satisfied. Unlike the previously known algorithms, it
does not require polynomial computations over algebraic number fields
or computation with fractional power series representations of polynomial
roots. Also, unlike the CAD construction algorithm given in \cite{S7},
the algorithm never needs to revert to exact algebraic number computations.
We have implemented the algorithm as an extension to the CAD implementation
in \emph{Mathematica}. Empirical results show that computation of
CAD with cell adjacency data takes time comparable to computation
of CAD without cell adjacency data.

The general idea of the algorithm is as follows. It starts, similarly
as the CAD algorithm, with computing a sample point in each cell in
$\mathbb{R}^{k}$ for all $k\leq n$. The sample point of a cell in
$\mathbb{R}^{k+1}$ extends the sample point of the projection of
the cell on $\mathbb{R}^{k}$. Then for each pair of adjacent CAD
cells $C$ and $C^{\prime}$ in $\mathbb{R}^{k}$ with $\dim C^{\prime}=\dim C-1$
the algorithm constructs a point $p\in C$ that is {}``sufficiently
close'' to the sample point $p^{\prime}$ of $C^{\prime}$. Here
{}``sufficiently close'' means that computing approximations of
roots of projection polynomials at $p$ and $p^{\prime}$ is sufficient
to identify which roots over $C$ tend to which roots over $C^{\prime}$
and to continue the construction to pairs of adjacent CAD cells in
$\mathbb{R}^{k+1}$. The construction gives all pairs of adjacent
cells in $\mathbb{R}^{n}$ whose dimensions differ by one. For well-based
CAD this is sufficient to determine all cell adjacencies.

\section{Preliminaries}

A \emph{system of polynomial equations and inequalities} in variables
$x_{1},\ldots,x_{n}$ is a formula\[
S(x_{1},\ldots,x_{n})=\bigvee_{1\leq i\leq l}\bigwedge_{1\leq j\leq m_{i}}f_{i,j}(x_{1},\ldots,x_{n})\rho_{i,j}0\]
where $f_{i,j}\in\mathbb{R}[x_{1},\ldots,x_{n}]$, and each $\rho_{i,j}$
is one of $<,\leq,\geq,>,=,$ or $\neq$. 

A subset of $\mathbb{R}^{n}$ is \emph{semialgebraic} if it is a solution
set of a system of polynomial equations and inequalities. 

A \emph{quantified system of real polynomial equations and inequalities}
in free variables $x_{1},\ldots,x_{m}$ and quantified variables $x_{m+1},\ldots,x_{n}$
is a formula\begin{equation}
Q_{1}x_{m+1}\ldots Q_{n-m}x_{n}S(x_{1},\ldots,x_{n})\label{quantsyst}\end{equation}
 Where $Q_{i}$ is $\exists$ or $\forall$, and $S$ is a system
of real polynomial equations and inequalities in $x_{1},\ldots,x_{n}$.

By Tarski's theorem (see \cite{T}), solution sets of quantified systems
of real polynomial equations and inequalities are semialgebraic.
\begin{notation}
For $k\geq1$, let $\overline{a}$ denote a $k$-tuple $(a_{1},\ldots,a_{k})$
of real numbers and let $\overline{x}$ denote a $k$-tuple $(x_{1},\ldots,x_{k})$
of variables.
\end{notation}
Every semialgebraic set can be represented as a finite union of disjoint
\emph{cells} (see \cite{L}), defined recursively as follows.
\begin{enumerate}
\item A cell in $\mathbb{R}$ is a point or an open interval.
\item A cell in $\mathbb{R}^{k+1}$ has one of the two forms\begin{eqnarray*}
 & \{(\overline{a},a_{k+1}):\overline{a}\in C_{k}\wedge a_{k+1}=r(\overline{a})\}\\
 & \{(\overline{a},a_{k+1}):\overline{a}\in C_{k}\wedge r_{1}(\overline{a})<a_{k+1}<r_{2}(\overline{a})\}\end{eqnarray*}
where $C_{k}$ is a cell in $\mathbb{R}^{k}$, $r$ is a continuous
semialgebraic function, and $r_{1}$ and $r_{2}$ are continuous semialgebraic
functions, $-\infty$, or $\infty$, and $r_{1}<r_{2}$ on $C_{k}$. 
\end{enumerate}
For a cell $C\subseteq\mathbb{R}^{n}$ let $\Pi_{k}(C)\subseteq\mathbb{R}^{k}$,
for $k\leq n$, denote the projection of $C$ on $\mathbb{R}^{k}$.
A finite collection $D$ of cells in $\mathbb{R}^{n}$ is \emph{cylindrically
arranged} if for any $C_{1},C_{2}\in D$ and any $k\leq n$ $\Pi_{k}(C_{1})$
and $\Pi_{k}(C_{2})$ are either disjoint or identical. 

A \emph{cylindrical algebraic decomposition (CAD) of $\mathbb{R}^{n}$
}is a finite collection $D$ of pairwise disjoint cylindrically arranged
cells in $\mathbb{R}^{n}$ such that $\bigcup_{C\in D}C=\mathbb{R}^{n}$. 

Let $P\subset\mathbb{R}[x_{1},\ldots,x_{n}]$ be a finite set of polynomials.
A CAD $D$ of\emph{ $\mathbb{R}^{n}$ }is \emph{P-invariant }if each
element of $P$ has a constant sign on each cell of $D$.

Let \emph{$A\subseteq\mathbb{R}^{n}$ }be a semialgebraic set. A CAD
$D$ of\emph{ $\mathbb{R}^{n}$ }is \emph{consistent with} $A$ if
$A=\bigcup_{C\in D_{A}}C$ for some $D_{A}\subseteq D$.

Let $C_{1},C_{2}\in D$. $C_{1}$ and $C_{2}$ are \emph{adjacent}
if $C_{1}\neq C_{2}$ and $C_{1}\cup C_{2}$ is connected.

For a semialgebraic set $A$ presented by a quantified system of polynomial
equations and inequalities (\ref{quantsyst}), the CAD algorithm can
be used to find a CAD $D$ of $\mathbb{R}^{n}$ consistent with $A$.
The CAD $D$ is represented by a cylindrical algebraic formula (CAF).
A CAF describes each cell by giving explicit semialgebraic function
bounds and the Boolean structure of a CAF reflects the cylindrical
arrangement of cells. Before we give a formal definition of a CAF,
let us first introduce some terminology.

Let $k\geq1$ and let $f=c_{d}y^{d}+\ldots+c_{0}$, where $c_{0},\ldots,c_{d}\in\mathbb{\mathbb{Z}}[\overline{x}]$.
A \emph{semialgebraic function} given by the \emph{defining polynomial}
$f$ and a \emph{root number} $\lambda\in\mathbb{N}_{+}$ is the function\begin{equation}
Root_{y,\lambda}f:\mathbb{R}^{k}\ni\overline{a}\longrightarrow Root_{y,\lambda}f(\overline{a})\in\mathbb{R}\label{rootfun}\end{equation}
where $Root_{y,\lambda}f(\overline{a})$ is the $\lambda$-th real
root of $f(\overline{a},y)\in\mathbb{R}[y]$. The function is defined
for those values of $\overline{a}$ for which $f(\overline{a},y)$
has at least $\lambda$ real roots. The real roots are ordered by
the increasing value and counted with multiplicities. A real algebraic
number $Root_{y,\lambda}f\in\mathbb{R}$ given by a \emph{defining
polynomial} $f\in\mathbb{Z}[y]$ and a \emph{root number} $\lambda$
is the $\lambda$-th real root of $f$. 

Let $C$ be a connected subset of $\mathbb{R}^{k}$. The function
$Root_{y,\lambda}f$ is\emph{ regular} on \emph{$C$} if it is continuous
on $C$, $c_{d}(\overline{a})\neq0$ for all $\overline{a}\in C$,
and there exists\emph{ }$m\in\mathbb{\mathbb{N}}_{+}$ such that for
any $\overline{a}\in C$ $Root_{y,\lambda}f(\overline{a})$ is a root
of $f(\overline{a},y)$ of multiplicity $m$. 

The polynomial $f$ is \emph{degree-invariant} on $C$ if there exists\emph{
}$e\in\mathbb{\mathbb{N}}$ such that $c_{d}(\overline{a})=\ldots=c_{e+1}(\overline{a})=0\wedge c_{e}(\overline{a})\neq0$
for all $\overline{a}\in C$. 

A set $W=\{f_{1},\ldots,f_{m}\}$ of polynomials is \emph{delineable}
on $C$ if all elements of $W$ are degree-invariant on $C$ and for
$1\leq i\leq m$\[
f_{i}^{-1}(0)\cap(C\times\mathbb{R})=\{r_{i,1},\ldots,r_{i,l_{i}}\}\]
where $r_{i,1},\ldots,r_{i,l_{i}}$ are disjoint regular semialgebraic
functions and for $i_{1}\neq i_{2}$ $r_{i_{1},j_{1}}$ and $r_{i_{2},j_{2}}$
are either disjoint or equal. Functions $r_{i,j}$ are \emph{root
functions of $f_{i}$ over $C$}.

Let $W$ be delineable on $C$, let $r_{1}<\ldots<r_{l}$ be all root
functions of elements of $W$ over $C$, and let $r_{0}=-\infty$
and $r_{l+1}=\infty$. For $1\leq i\leq l$, the $i$-th \emph{$W$-section
over $C$} is the set\[
\{(\overline{a},a_{k+1}):\overline{a}\in C\wedge a_{k+1}=r_{i}(\overline{a})\}\]
For $1\leq i\leq l+1$, the $i$-th \emph{$W$-sector over $C$} is
the set\[
\{(\overline{a},a_{k+1}):\overline{a}\in C\wedge r_{i-1}(\overline{a})<a_{k+1}<r_{i}(\overline{a})\}\]
\emph{$W$-stack over $C$} is the set of all $W$-sections and $W$-sectors
over $C$.

A formula $F$ is an \emph{algebraic constraint} with \emph{bounds}
$BDS(F)$ if it is a level-$k$ equational or inequality constraint
with $1\leq k\leq n$ defined as follows.\emph{ }
\begin{enumerate}
\item \emph{A level}-$1$ \emph{equational constraint} has the form $x_{1}=r$,
where $r$ is a real algebraic number, and $BDS(F)=\{r\}$.
\item \emph{A level}-$1$ \emph{inequality constraint} has the form $r_{1}<x_{1}<r_{2}$,
where $r_{1}$ and $r_{2}$ are real algebraic numbers, $-\infty$,
or $\infty$, and $BDS(F)=\{r_{1},r_{2}\}\setminus\{-\infty,\infty\}$. 
\item \emph{A level}-$k+1$ \emph{equational constraint} has the form $x_{k+1}=r(\overline{x})$,
where $r$ is a semialgebraic function, and $BDS(F)=\{r\}$.
\item \emph{A level}-$k+1$ \emph{inequality constraint} has the form $r_{1}(\overline{x})<x_{k+1}<r_{2}(\overline{x})$,
where $r_{1}$ and $r_{2}$ are semialgebraic functions, $-\infty$,
or $\infty$, and $BDS(F)=\{r_{1},r_{2}\}\setminus\{-\infty,\infty\}$. 
\end{enumerate}
A level-$k+1$ algebraic constraint $F$ is \emph{regular} on a connected
set $C\subseteq\mathbb{R}^{k}$ if all elements of $BDS(F)$ are regular
on $C$ and, if $F$ is an inequality constraint, $r_{1}<r_{2}$ on
$C$.
\begin{defn}
An \emph{atomic cylindrical algebraic formula (CAF)} $F$ in $(x_{1},\ldots,x_{n})$
has the form $F_{1}\wedge\ldots\wedge F_{n}$, where $F_{k}$ is a
level-$k$ algebraic constraint for $1\leq k\leq n$ and $F_{k+1}$
is regular on the solution set of $F_{1}\wedge\ldots\wedge F_{k}$
for $1\leq k<n$. 

\emph{Level-$k$ cylindrical formulas} in $(x_{1},\ldots,x_{n})$
are defined recursively as follows
\begin{enumerate}
\item A level-$n$ cylindrical formula is $false$ or a disjunction of level-$n$
algebraic constraints.
\item A level-$k$ cylindrical formula, with $1\leq k<n$, is $false$ or
has the form\[
(F_{1}\wedge G_{1})\vee\ldots\vee(F_{m}\wedge G_{m})\]
where $F_{i}$ are level-$k$ algebraic constraints and $G_{i}$ are
level-$k+1$ cylindrical formulas.
\end{enumerate}
A \emph{cylindrical algebraic formula (CAF)} is a level-$1$ cylindrical
formula $F$ such that distributing conjunction over disjunction in
$F$ gives \[
DNF(F)=F_{1}\vee\ldots\vee F_{l}\]
where each $F_{i}$ is an atomic CAF. Let $C(F_{i})$ denote the solution
set of $F_{i}$ and let $D(F)=\{C(F_{1}),\ldots,C(F_{l})\}$. The
\emph{bound polynomials }of $F$ is a finite set $BP(F)\subset\mathbb{R}[x_{1},\ldots,x_{n}]$
which consists of all polynomials $f$ such that $Root_{x_{k},\lambda}f\in BDS(G)$
for some $1\leq k\leq n$ and a level-$k$ algebraic constraint $G$
that appears in $F$. 
\end{defn}
Note that $C(F_{i})$ is a cell and $D(F)$ is a finite collection
of pairwise disjoint cylindrically arranged cells. 

For a CAF $F$ in $(x_{1},\ldots,x_{n})$, let $\Pi_{k}(F)$ denote
the CAF in $(x_{1},\ldots,x_{k})$ obtained from $F$ by removing
all level-$k+1$ subformulas. Then \[
D(\Pi_{k}(F))=\{\Pi_{k}(C)\::\: C\in D(F)\}\]

Following the terminology of \cite{SS}, we define a well-based CAF
as follows.
\begin{defn}
A CAF\emph{ }$F$ is \emph{well-based} if\emph{ $D(F)$} is a $BP(F)$-invariant
CAD of $\mathbb{R}^{n}$ and for any $f\in BP(F)$ if $f\in\mathbb{R}[x_{1},\ldots,x_{k+1}]\setminus\mathbb{R}[x_{1},\ldots,x_{k}]$
then for any $\overline{a}\in\mathbb{R}^{k}$ $f(\overline{a},x_{k+1})$
is not identically zero. 
\end{defn}
In a CAD corresponding to a well-based CAF a closure of a cell is
a union of cells and the only cells from other stacks adjacent to
a given section are sections defined by the same polynomial. Moreover,
any two adjacent cells have different dimensions and are connected
through a chain of adjacent cells with dimensions increasing by one,
and hence to determine all cell adjacencies it is sufficient to find
all pairs of adjacent cells whose dimensions differ by one. These
properties, stated precisely in Proposition \ref{WellBasedProp},
are essential for our algorithm. 
\begin{prop}
\label{WellBasedProp}Let $F$ be a well-based CAF.
\begin{enumerate}
\item If $C\in D(F)$ then there exits cells $C_{1},\ldots,C_{m}\in D(F)$
such that $\overline{C}=C\cup C_{1}\cup\ldots\cup C_{m}$.
\item Let $C\in D(F)$ be a section\[
C=\{(\overline{a},a_{n}):\overline{a}\in\Pi_{n-1}(C)\wedge a_{n}=Root_{x_{n},\lambda}f(\overline{a})\}\]
and let $C^{\prime}\in D(\Pi_{n-1}(F))$ be a cell adjacent to $\Pi_{n-1}(C)$
with $\dim C^{\prime}<\dim\Pi_{n-1}(C)$. Then either $\overline{C}\cap(C^{\prime}\times\mathbb{R})$
is equal to a section\[
\{(\overline{a},a_{n}):\overline{a}\in C^{\prime}\wedge a_{n}=Root_{x_{n},\lambda^{\prime}}f(\overline{a})\}\]
for some $1\leq\lambda^{\prime}\leq\deg_{x_{n}}(f)$, or for any $\overline{a}\in C^{\prime}$
\[
\lim_{\overline{b}\in\Pi_{n-1}(C),\overline{b}\rightarrow\overline{a}}Root_{x_{n},\lambda}f(\overline{a})=-\infty\]
 or for any $\overline{a}\in C^{\prime}$ \[
\lim_{\overline{b}\in\Pi_{n-1}(C),\overline{b}\rightarrow\overline{a}}Root_{x_{n},\lambda}f(\overline{a})=\infty\]

\item Let $C_{k},C_{l}\in D(F)$ be adjacent cells such that $\dim(C_{k})=k$
and $\dim(C_{l})=l$. Then $k\neq l$ and if $k<l$ then there exist
cells $C_{k+1},\ldots,C_{l-1}\in D(F)$ such that $\dim(C_{j})=j$
and $C_{j}\subseteq\overline{C_{j+1}}$ for $k\leq j<l$.
\end{enumerate}
\end{prop}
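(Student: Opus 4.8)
The plan is to prove the proposition by induction on $n$, establishing at each level statement (2) first, deducing (1) from it, and then obtaining (3). The base case $n=1$ is immediate: the cells of $D(F)$ are points and open intervals that partition $\mathbb{R}$, so the closure of an open interval is obtained by adjoining its finite endpoints, which are point cells — establishing (1); statement (2) is vacuous because $\mathbb{R}^{0}$ has only a $0$-dimensional cell, so $\dim C'<\dim\Pi_{0}(C)$ cannot occur; and (3) holds since a point and an interval with connected union have dimensions $0$ and $1$. For the inductive step one also notes that $\Pi_{n-1}(F)$ is again a well-based CAF in $\mathbb{R}^{n-1}$, so the induction hypothesis applies to it.

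The heart of the step is (2). Write $D=\Pi_{n-1}(C)$ and $g=Root_{x_{n},\lambda}f$, so $C$ is the graph of the continuous function $g$ over $D$. Because $C'$ is adjacent to $D$ with $\dim C'<\dim D$, the induction hypothesis (1) for $\Pi_{n-1}(F)$ forces $C'\subseteq\overline{D}$: one cannot have $D\subseteq\overline{C'}$ for dimension reasons, and two disjoint cylindrically arranged cells neither of which meets the other's closure cannot have connected union. Next, $f\in BP(F)$, $D(F)$ is $BP(F)$-invariant, and $F$ is well-based, so $f(\overline{a},x_{n})$ is not the zero polynomial for $\overline{a}\in C'$; hence $f$ vanishes over $C'$ only along finitely many disjoint continuous branches $h_{1}<\ldots<h_{s}$, each of which is a section $Root_{x_{n},\lambda'}f$ over $C'$ for some $\lambda'$ not depending on the point. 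The key topological input is that a cell of a CAD is locally connected along its frontier: for a point $\overline{a}\in C'$ the set $U_{\varepsilon}=D\cap B(\overline{a},\varepsilon)$ is connected for all small $\varepsilon>0$ (a consequence of the cylindrical arrangement and regularity of the bounding functions, cf.\ \cite{SS,MC2}), so $g(U_{\varepsilon})$ is a subinterval of $\mathbb{R}$, and as $\varepsilon\downarrow0$ the nested intersection $\bigcap_{\varepsilon}\overline{g(U_{\varepsilon})}$ is a closed subinterval of $\mathbb{R}\cup\{\pm\infty\}$ equal to the set of accumulation values of $g$ at $\overline{a}$. Every finite accumulation value is a root of $f(\overline{a},x_{n})$ by continuity of $f$, and there are only finitely many such roots, so this set is a single point of $\mathbb{R}\cup\{\pm\infty\}$; that is, $\lim_{\overline{b}\in D,\ \overline{b}\to\overline{a}}g(\overline{b})$ exists in $\mathbb{R}\cup\{\pm\infty\}$. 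A routine argument then shows that the subsets of $C'$ on which this limit is finite, equal to $+\infty$, and equal to $-\infty$ are each open, and that on the first the limit is a continuous function with values among the disjoint branches $h_{j}$; since $C'$ is connected, exactly one alternative holds uniformly, and in the finite case the limit is a single section $Root_{x_{n},\lambda'}f$ over $C'$. Since $\overline{C}$ is closed and contains precisely this section over $C'$ (and nothing over $C'$ in the other two cases), this is exactly the dichotomy asserted in (2).

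Statement (1) follows by induction: writing $\overline{D}=D\cup D_{1}\cup\ldots\cup D_{m}$ via the induction hypothesis and using $\overline{C}\subseteq\overline{D}\times\mathbb{R}$, one checks that over $D$ the set $\overline{C}$ is $C$ itself when $C$ is a section and the closed band between the two bounding functions when $C$ is a sector, and that over each $D_{i}$ statement (2), applied to the bounding sections of $C$, makes $\overline{C}\cap(D_{i}\times\mathbb{R})$ the closed band between the two (possibly infinite) limit functions; in all cases these are unions of cells of the stack over the base, and collecting them yields the $C_{j}$. For (3): if $C_{k},C_{l}\in D(F)$ are adjacent then, as above, one of them lies in the closure of the other, say $C_{k}\subseteq\overline{C_{l}}\setminus C_{l}$, which by (1) is a union of cells all of dimension strictly less than $\dim C_{l}$ (the frontier of a semialgebraic set has smaller dimension); hence $k\neq l$ and, after relabelling, $k<l$. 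The chain is produced by iterating a single step: whenever a cell $E$ lies in $\overline{C}$ with $\dim E<\dim C$, there is a cell $C'$ with $\dim C'=\dim C-1$ and $E\subseteq\overline{C'}\subseteq\overline{C}$. This step is proved by a nested induction on $n$ using the cylindrical structure and the one-dimensional picture of a stack, in which the frontier of a $(d+1)$-dimensional sector is exactly its two neighbouring $d$-dimensional sections; applying it repeatedly from $\overline{C_{l}}$ down to the dimension of $C_{k}$ gives $C_{l-1},\ldots,C_{k+1}$ with $C_{j}\subseteq\overline{C_{j+1}}$.

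The step I expect to be the main obstacle is the topological core of (2): proving both that the limit of $g$ along $D$ exists at each $\overline{a}\in C'$ and that its behaviour is the same (one fixed $\lambda'$, or $+\infty$, or $-\infty$) for all such $\overline{a}$. This is precisely where well-basedness is indispensable — it forbids $f(\overline{a},x_{n})$ from collapsing to the zero polynomial, keeping the number of roots over $C'$ finite and organized into regular branches — together with the local connectedness of cells, and the conclusion genuinely fails for CADs that are not well-based. Once (2) is in hand, the band bookkeeping in (1) is routine, and the only remaining nontrivial ingredient in (3) is the descent-by-one-dimension step, itself a standard consequence of the cylindrical structure, in the spirit of \cite{SS,ACM2}.
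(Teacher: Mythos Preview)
Your approach differs from the paper's both in organization and in which facts are taken as black boxes. The paper does not attempt a self-contained argument for (1) or the hard part of (2): it cites Lemma~1 of \cite{SS} for (1) outright, and for (2) it first uses (1) to get $C'\subset\overline{\Pi_{n-1}(C)}$ and then invokes Lemma~2 of \cite{SS}, which directly furnishes a continuous extension $r:\overline{\Pi_{n-1}(C)}\to\mathbb{R}\cup\{\pm\infty\}$ of the root function with $r$ either infinite or a root of $f$; the trichotomy then follows because $f$ is delineable on the connected set $C'$. Your argument for (2) instead tries to derive existence of the limit from the claim that $U_\varepsilon=D\cap B(\overline{a},\varepsilon)$ is connected for small $\varepsilon$. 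That claim is not automatic---it is exactly the regularity that can fail in non-well-based CADs---and establishing it in the well-based case is essentially the content of the Schwartz--Sharir analysis you end up citing anyway. So you rely on \cite{SS} at the same depth the paper does, only less directly; note also that the paper's order, proving (1) before (2), is the natural one since (2) uses (1).

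For (3) the two proofs genuinely diverge. The paper runs a double induction on $n$ and on $l-k$: it projects to $\mathbb{R}^{n-1}$, applies the inductive hypothesis there to obtain an intermediate base cell $C'_{l'-1}$, and constructs an explicit sequence $(x_m,y_m)\to(x_0,y_0)\in C_k$ with each $(x_m,y_m)$ in the closed band $\{(x,y):x\in C'_{l'-1},\ r(x)\le y\le s(x)\}\subset\overline{C_l}$; since that band is a finite union of cells, some cell contains infinitely many terms and is therefore adjacent to both $C_k$ and $C_l$, closing the induction on $l-k$. Your route---iterating a single descent step ``find $C'$ of dimension $\dim C-1$ with $E\subseteq\overline{C'}\subseteq\overline{C}$''---would also produce the chain and is arguably cleaner once that step is in hand, but your justification (``nested induction on $n$ using the cylindrical structure and the one-dimensional picture of a stack'') is too thin: the stack picture covers only the same-stack case, and the cross-stack case needs precisely the kind of lifting argument the paper spells out. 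The plan is sound, but the descent step is where the real work hides, and you have not yet done it.
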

\begin{proof}
Part $(1)$ is Lemma 1 of \cite{SS}. To prove $(2)$ first note that,
by $(1)$, $C^{\prime}\subset\overline{\Pi_{n-1}(C)}$. By Lemma 2
of \cite{SS}, there exists a unique continuous function $r:\overline{\Pi_{n-1}(C)}\rightarrow\mathbb{R}\cup\{-\infty,\infty\}$
extending $Root_{x_{n},\lambda}f(\overline{a})$. Moreover, $r$ is
either infinite or a root of $f$. Since $f$ is delineable on $C^{\prime}$
and $C^{\prime}$ is connected, either $r$ is a root of $f$ on $C^{\prime}$,
or $r\equiv-\infty$ on $C^{\prime}$, or $r\equiv\infty$ on $C^{\prime}$. 

We will prove $(3)$ by induction on $n$. Note that by $(1)$ it
is sufficient to find cells $C_{j}$ such that $\dim(C_{j})=j$ and
$C_{j}$ and $C_{j+1}$ are adjacent for $k\leq j<l$. If $n=1$ then
dimensions of any pair of adjacent cells differ by one, hence $(3)$
is true. To prove $(3)$ for $n>1$ we will use induction on $l-k$.
If $l-k=1$ then $(3)$ is true. Suppose $l-k>1$. Let $C_{k^{\prime}}^{\prime}=\Pi_{n-1}(C_{k})$
and $C_{l^{\prime}}^{\prime}=\Pi_{n-1}(C_{l})$, where $\dim(C_{k^{\prime}}^{\prime})=k^{\prime}$
and $\dim(C_{l^{\prime}}^{\prime})=l^{\prime}$. If $C_{l}$ is a
section, then, by Lemma 2 of \cite{SS}, there exists a continuous
function $r:\overline{C_{l^{\prime}}^{\prime}}\rightarrow\mathbb{R}\cup\{-\infty,\infty\}$
such that $C_{l}=\{(x,r(x))\::\: x\in C_{l^{\prime}}^{\prime}\}$
and $r$ is infinite or a root of an element of $BP(F)$. In this
case set $s=r$. Similarly, if $C_{l}$ is a sector, then, by Lemma
2 of \cite{SS}, there exists continuous functions $r,s:\overline{C_{l^{\prime}}^{\prime}}\rightarrow\mathbb{R}\cup\{-\infty,\infty\}$
such that $C_{l}=\{(x,y)\::\: x\in C_{l^{\prime}}^{\prime}\wedge r(x)<y<s(x)\}$
and $r$ and $s$ are infinite or roots of elements of $BP(F)$. Since
$l-k>1$, $k^{\prime}<l^{\prime}$. Suppose first that $l^{\prime}-k^{\prime}=1$.
Since $l-k>1$, $C_{k}$ is a section and $C_{l}$ is a sector. If
$C_{k}=\{(x,t(x))\::\: x\in C_{k^{\prime}}^{\prime}\}$, where $t=r$
or $t=s$, then $t$ is finite on $C_{l^{\prime}}^{\prime}$, $C_{k}$
is adjacent to $C_{k+1}:=\{(x,t(x))\::\: x\in C_{l^{\prime}}^{\prime}\}$,
and $C_{k+1}$ is adjacent to $C_{l}$. Since $l=k+2$, $(3)$ is
true. Otherwise, let $C_{k+1}$ be the sector directly below $C_{k}$.
Then $C_{k+1}\subset\{(x,y)\::\: x\in C_{k^{\prime}}^{\prime}\wedge r(x)<y<s(x)\}$,
and hence $C_{k+1}$ is adjacent to $C_{l}$. Again, since $l=k+2$,
$(3)$ is true. Now suppose that $l^{\prime}-k^{\prime}>1$. $\Pi_{n-1}(F)$
is well-based, hence, by the inductive hypothesis on $n$, there exist
cells $C_{k^{\prime}+1}^{\prime},\ldots,C_{l^{\prime}-1}^{\prime}\in D(\Pi_{n-1}(F))$
such that $\dim(C_{j}^{\prime})=j$ and $C_{j}^{\prime}\subseteq\overline{C_{j+1}^{\prime}}$
for $k^{\prime}\leq j<l^{\prime}$. Let $x_{0}\in C_{k^{\prime}}^{\prime}$
and $(x_{0},y_{0})\in C_{k}$. Then $r(x_{0})\leq y_{0}\leq s(x_{0})$.
Since $C_{k^{\prime}}^{\prime}$ is adjacent to $C_{l^{\prime}-1}^{\prime}$,
there exist a sequence $\{x_{n}\}_{n\geq1}$ such that $x_{n}\in C_{l^{\prime}-1}^{\prime}$
and $\lim_{n\rightarrow\infty}x_{n}=x_{0}$. Put $y_{n}=\max(r(x_{n}),\min(s(x_{n}),y_{0}))$.
Then $\lim_{n\rightarrow\infty}(x_{n},y_{n})=(x_{0},y_{0})$. The
set \[
S=\{(x,y)\::\: x\in C_{l^{\prime}-1}^{\prime}\wedge y\in\mathbb{R}\wedge r(x)\leq y\leq s(x)\}\]
is a union of a finite number of cells, $S\subset\overline{C_{l}}$,
and $(x_{n},y_{n})\in S$. Hence, there exists a cell $C\subseteq S$
such that $C$ contains infinitely many elements of the sequence $\{(x_{n},y_{n})\}_{n\geq1}$.
Therefore, $C$ is adjacent to both $C_{k}$ and $C_{l}$. Since $\dim C-k<l-k$
and $l-\dim C<l-k$, $(3)$ is true by the inductive hypothesis on
$l-k$.
\end{proof}
For a given semialgebraic set $A$ a well-based CAF $F$ such that
$D(F)$ is consistent with $A$ may not exist in a given system of
coordinates. However, as shown in \cite{SS}, there always exists
a linear change of variables after which a well-based CAF $F$ such
that $D(F)$ is consistent with $A$ does exist.
\begin{example}
If $A$ is the real solution set of $xy+xz+yz=0$ then a well-based
CAF $F$ such that $D(F)$ is consistent with $A$ does not exist
for any order of variables. A CAD computed using McCallum's projection
operator \cite{MC1} includes cells \begin{eqnarray*}
C_{1} & = & \{(x,y,z)\::\: x>0\wedge y>-x\wedge z=-\frac{xy}{x+y}\}\\
C_{2} & = & \{(x,y,z)\::\: x=0\wedge y=0\}\end{eqnarray*}
$\overline{C_{1}}$ is not a union of cells, since $\overline{C_{1}}\cap C_{2}=\{(x,y,z)\::\: x=0\wedge y=0\wedge z\geq0\}$,
and section $C_{1}$ is adjacent to a sector $C_{2}$ from a different
stack. After the linear change of variables $(x,y,z)\rightarrow(x,y+z,z)$
$A$ is transformed to the solution set of $z^{2}+z(y+2x)+xy=0$.
The following CAF $F$ is well-based and $D(F)$ is consistent with
the transformed $A$.\begin{eqnarray*}
F & = & (x<0\wedge G_{1})\vee(x=0\wedge((y<Root_{y,1}g\wedge G_{1})\vee\\
 &  & (y=Root_{y,1}g\wedge G_{2})\vee(y>Root_{y,1}g\wedge G_{1})))\vee(x>0\wedge G_{1})\end{eqnarray*}
where\begin{eqnarray*}
f & = & z^{2}+z(y+2x)+xy\\
g & = & y^{2}+4x^{2}\\
G_{1} & = & z<Root_{z,1}f\vee z=Root_{z,1}f\vee Root_{z,1}f<z<Root_{z,2}f\vee\\
 &  & z=Root_{z,2}f\vee z>Root_{z,2}f\\
G_{2} & = & z<Root_{z,1}f\vee z=Root_{z,1}f\vee z>Root_{z,1}f\end{eqnarray*}

\end{example}

\section{Root isolation algorithms}

In this section we describe root isolation algorithms we will use
in the algorithm computing cell adjacencies. Let us first introduce
some notations and subalgorithms. 

Let $\Delta(c,r)=\{z\in\mathbb{C}\::\:\mid z-c\mid\leq r\}$ denote
a disk in the complex plane, let $\mathbb{Q}_{2}=\mathbb{Z}[\frac{1}{2}]=\{a2^{b}\::\: a,b\in\mathbb{Z}\}$
denote the set of dyadic rational numbers, and let $I_{2}(\mathbb{C})=\{\Delta(c,r)\::\: c\in\mathbb{Q}_{2}[\imath]\wedge r\in\mathbb{Q}_{2}\wedge r>0\}$
denote the set of discs in the complex plane with dyadic Gaussian
rational centers and dyadic rational radii. For a disc $Z=\Delta(c,r)\in I_{2}(\mathbb{C})$,
let $\gamma(Z):=c$ and $\rho(Z):=r$ denote the center and the radius
of $Z$, let $\underline{Z}:=\max(0,\mid c\mid-r)$ and $\overline{Z}:=\mid c\mid+r$
denote the minimum and maximum of absolute values of elements of $Z$,
let $conj(Z)$ denote the disc that consists of complex conjugates
of elements of $Z$, and let $dbl(Z)=\Delta(c,2r)$ and $quad(Z)=\Delta(c,4r)$.
When we refer to interval arithmetic operations we mean circular complex
interval (disc) arithmetic (see e.g. \cite{PP}). 
\begin{prop}
\label{ApproxRootsProp}There exists an algorithm (ApproximateRoots)
that takes as input a polynomial\[
g=b_{N}x^{N}+\ldots+b_{0}\in\mathbb{\mathbb{Q}}_{2}[\imath][x]\]
and $p\in\mathbb{\mathbb{N}}$ and outputs $(s_{1},\ldots,s_{N})\in\mathbb{\mathbb{Q}}_{2}[\imath]^{N}$
such that for any polynomial \[
f=a_{N}x^{N}+\ldots+a_{0}=a_{N}(x-\sigma_{1})\cdots(x-\sigma_{N})\in\mathbb{C}[x]\]
and any $\epsilon>0$ there exits $p_{0}\in\mathbb{N}$ such that
if $p\geq p_{0}$ and, for all $0\leq i\leq N$, \[
\mid b_{i}-a_{i}\mid\leq2^{-p}\max_{0\leq i\leq N}|a_{i}|\]
then, after a suitable reordering of roots, for all $1\leq j\leq N$
$\mid s_{j}-\sigma_{j}\mid\leq\epsilon$. \end{prop}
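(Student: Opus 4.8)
The plan is to build ApproximateRoots out of a certified/validated numerical root-isolation routine plus a precision-driving loop, and to prove the asymptotic correctness statement by a continuity-of-roots argument. Concretely, on input $g$ and $p$, the algorithm first rescales $g$ so that its coefficients have absolute value at most $1$ (dividing by $\max_i |b_i|$, or rather by a dyadic upper bound for it, staying inside $\mathbb{Q}_2[\imath]$), then runs a standard validated complex root-isolation method — e.g. a subdivision/Graeffe or Pellet-test based scheme as in the references, using the disc arithmetic over $I_2(\mathbb{C})$ introduced above — with a working precision tied to $p$, refining until it has produced $N$ discs (counted with multiplicity) each of radius at most $2^{-p}$ whose union provably contains all roots of the rescaled polynomial and such that the multiplicity count in each disc is certified. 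It then returns the centers $s_1,\dots,s_N\in\mathbb{Q}_2[\imath]$ of these discs (a disc of certified multiplicity $m$ contributing its center $m$ times). Termination of the refinement loop for every fixed input follows from the fact that $g$ is a fixed polynomial with finitely many roots and the isolation method is guaranteed to separate and shrink clusters once the working precision is large enough relative to the (finite, input-dependent) root separation of $g$.

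Next I would establish the quantitative claim. Fix $f=a_N(x-\sigma_1)\cdots(x-\sigma_N)$ and $\epsilon>0$. The key classical input is continuity of roots as a function of coefficients: there is $\delta>0$ such that any polynomial of degree $N$ whose normalized coefficient vector is within $\delta$ of that of $f$ has, after suitable reordering, all its roots within $\epsilon/2$ of the corresponding $\sigma_j$. Two perturbations must be controlled and made smaller than $\delta$: (i) the difference between $g$ (after rescaling) and $f$ (after the analogous rescaling), which by hypothesis $|b_i-a_i|\le 2^{-p}\max_i|a_i|$ is $O(2^{-p})$ once $p$ is large; and (ii) the error between the true roots of the rescaled $g$ and the computed centers $s_j$, which is at most the output radius $\le 2^{-p}$ by the validity of the isolation certificate. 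Choosing $p_0$ so that $2^{-p_0}$ dominates both the rescaling distortion and makes the sum of (i) and (ii) at most $\epsilon$, and using that $p\ge p_0$ only increases precision and hence only shrinks the output discs, gives $|s_j-\sigma_j|\le\epsilon$ after reordering. Note the reordering matching $s_j$ to $\sigma_j$ is exactly the one provided by the continuity statement applied to the rescaled $g$, composed with the bookkeeping that sends disc centers (with multiplicity) to the individual $\sigma_j$ clustered in that disc.

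The main obstacle is the interaction between the \emph{fixed} precision parameter $p$ in the interface and the \emph{unknown} target polynomial $f$: the algorithm does not see $f$, only $g$ and $p$, so one cannot adapt the precision to $f$'s root separation. The point to make carefully is that this is fine because the statement is purely asymptotic — $p_0$ is allowed to depend on $f$ and $\epsilon$ — and the output radius bound $\le 2^{-p}$ is uniform and certified regardless of which $f$ the caller has in mind. A secondary technical point is that the isolation loop's termination is about $g$ alone, whereas correctness of the output \emph{relative to} $f$ is the perturbation argument; these must be kept separate, and one must check that the multiplicity-counting certificates for $g$'s clusters remain compatible with splitting $\sigma_j$'s that lie in distinct $f$-clusters but the same $g$-disc (harmless, since we only claim an $\epsilon$-matching, not a multiplicity match). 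I would also note in passing that robustness of the rescaling — using a dyadic over-approximation of $\max_i|b_i|$ — introduces only another $O(2^{-p})$-type distortion absorbed into the same $p_0$.
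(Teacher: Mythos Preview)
The paper's own ``proof'' is a single sentence: it cites Pan's algorithm \cite{P1} and asserts that it satisfies the proposition. There is thus no argument in the paper to compare against; the proposition is treated as a black-box import from the literature. Your proposal is genuinely different in that you sketch a self-contained construction (validated disc-based root isolation driven to radius $\le 2^{-p}$, centres returned with multiplicity) together with a continuity-of-roots perturbation argument linking the roots of $g$ to those of $f$. That argument is sound in outline and buys independence from the specific cited reference; the paper's approach buys brevity and, implicitly, the near-optimal complexity guarantees of Pan's algorithm, which your subdivision/Pellet scheme would not match.

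There is one genuine gap in your proposal, however. Your termination claim reads ``termination of the refinement loop for every fixed input follows from the fact that $g$ is a fixed polynomial with finitely many roots,'' but you are asking the loop to produce $N$ isolating discs. If $b_N=0$ (which the interface allows---$N$ is a formal degree bound, not the actual degree of $g$), then $g$ has fewer than $N$ roots and the loop never halts. The proposition requires the algorithm to output some $(s_1,\ldots,s_N)\in\mathbb{Q}_2[\imath]^N$ for \emph{every} input pair $(g,p)$; the asymptotic correctness clause only kicks in for $p\ge p_0$, and for such $p$ one has $b_N\ne 0$ since $|b_N-a_N|\le 2^{-p}\max_i|a_i|$ with $a_N\ne 0$. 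So you must add an explicit escape hatch (e.g.\ if $b_N=0$, or more robustly if the actual degree of $g$ is less than $N$, output $N$ arbitrary dyadic Gaussian rationals) before entering the refinement loop. With that patch your construction goes through.
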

\begin{proof}
The algorithm described in \cite{P1} satisfies Proposition \ref{ApproxRootsProp}.
\end{proof}
Let us now describe a subalgorithm computing roots of polynomials
with complex disc coefficients. The algorithm is based on the following
proposition (\cite{S7}, Proposition 4.1).
\begin{prop}
\label{RootMultProp}Let $f\in\mathbb{C}[z]$ be a polynomial of degree
$N$, $z_{0}\in\mathbb{C}$, $r>0$, and let $c_{i}:=\mid\frac{f^{(i)}(z_{0})}{i!}\mid$.
Suppose that \[
\max{}_{0\leq i<k}(\frac{Nc_{i}}{c_{k}})^{\frac{1}{k-i}}<r<\min{}_{k<i\leq N}(\frac{c_{k}}{Nc_{i}})^{\frac{1}{i-k}}\]
Then $f$ has exactly $k$ roots, multiplicities counted, in the disc
$\Delta(z_{0},r)$.
\end{prop}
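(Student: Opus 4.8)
The plan is to obtain the statement from Rouch\'e's theorem, comparing $f$ with the single Taylor monomial of index $k$ at $z_0$. After a translation we may assume $z_0=0$; writing $f(z)=\sum_{i=0}^{N}a_iz^{i}$ we have $a_i=f^{(i)}(0)/i!$ and $c_i=|a_i|$, and the task is to count the zeros of $f$ in $\Delta(0,r)$. Put $g(z)=a_kz^{k}$. Before invoking Rouch\'e I would first record that the hypothesis forces $c_k>0$: if $k<N$, then the factor $i=N$ in the right-hand minimum would require $r<(c_k/(Nc_N))^{1/(N-k)}$, which is impossible for $r>0$ unless $c_k\neq0$; and if $k=N$, then $c_N\neq0$ because $\deg f=N$. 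Hence $|g(z)|=c_kr^{k}>0$ on the circle $|z|=r$.

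The core estimate is the bound on $|f-g|$ on that circle. For $|z|=r$,
\[
|f(z)-g(z)|=\Bigl|\sum_{i\neq k}a_iz^{i}\Bigr|\le\sum_{i\neq k}c_ir^{i}.
\]
The two inequalities in the hypothesis are precisely what is needed to dominate each summand by $c_kr^{k}/N$. For $0\le i<k$ the left inequality gives $r^{k-i}>Nc_i/c_k$, hence $c_ir^{i}<c_kr^{k}/N$; for $k<i\le N$ the right inequality gives $r^{i-k}<c_k/(Nc_i)$, hence again $c_ir^{i}<c_kr^{k}/N$ (and the bound is trivial, with $c_kr^{k}>0$, when some $c_i=0$). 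Since there are exactly $N$ indices $i\in\{0,\dots,N\}$ with $i\neq k$, summing these strict inequalities yields
\[
|f(z)-g(z)|<N\cdot\frac{c_kr^{k}}{N}=c_kr^{k}=|g(z)|\qquad\text{for }|z|=r.
\]

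Now Rouch\'e's theorem on $\Delta(0,r)$ applies: $f$ and $g$ are entire, the strict inequality holds on the bounding circle (so neither has a zero there), and therefore $f$ and $g$ have the same number of zeros in $\Delta(0,r)$, counted with multiplicity. Since $g(z)=a_kz^{k}$ with $a_k\neq0$ has a zero of multiplicity $k$ at the origin and no other zeros, $f$ has exactly $k$ zeros in $\Delta(0,r)$; undoing the translation gives the claim for $f$ and $\Delta(z_0,r)$. I do not anticipate a genuine obstacle in this argument: the only care needed is the bookkeeping that there are exactly $N$ off-diagonal indices --- which is what makes the constant $N$ appearing in the hypothesis just large enough --- together with the degenerate cases $k=0$ and $k=N$, in which one of the two index ranges is empty and the corresponding extremum is read as $-\infty$ or $+\infty$, so that only one family of bounds is in force.
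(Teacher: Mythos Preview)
Your proof is correct. The reduction to $z_0=0$, the observation that the hypothesis forces $c_k\neq0$, the termwise bound $c_ir^{i}<c_kr^{k}/N$ obtained from each of the two families of inequalities, and the count of exactly $N$ off-diagonal indices combine cleanly to give the strict Rouch\'e estimate $|f-g|<|g|$ on $|z|=r$. You also handle the boundary correctly: since the paper's $\Delta(z_0,r)$ is the closed disc, your remark that neither $f$ nor $g$ vanishes on $|z|=r$ is needed to pass from the open-disc count supplied by Rouch\'e to the closed-disc statement.

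As for comparison with the paper: there is no proof to compare against. The paper merely cites this result as Proposition~4.1 of \cite{S7} and proceeds to use it. Your Rouch\'e argument is the standard one for this type of root-counting criterion (it is essentially Pellet's theorem specialized to a single coefficient), and it is in all likelihood the argument given in \cite{S7} as well.
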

The following is an extended version of Algorithm 4.2 from \cite{S7}.
The key difference is that this version does not assume that the leading
coefficient does not contain zero and provides a lower bound on the
absolute value of roots that tend to infinity when the leading coefficients
that contain zero vanish.
\begin{algorithm}
\label{IntervalRoots}(IntervalRoots)\\
Input: $Z_{0},\ldots,Z_{N}\in I_{2}(\mathbb{C})$. \\
Output: $D_{1},\ldots,D_{m}\in I_{2}(\mathbb{C})$, positive integers
$k_{1},\ldots,k_{m}$, and a positive radius $R$, such that for any
$a_{0}\in Z_{0},\ldots,a_{N}\in Z_{N}$ and any $1\leq i\leq m$ the
polynomial $f=a_{N}z^{N}+\ldots+a_{0}$ has exactly $k_{i}$ roots
in the disc $D_{i}$, multiplicities counted, and $f$ has no roots
in $\Delta(0,R)\setminus\bigcup_{i=1}^{m}D_{i}$. Moreover, for any
$1\leq i<j\leq m$, $D_{i}\cap D_{j}=\emptyset$, and $D_{i}\subseteq\Delta(0,R)$.
The other possible output is $Failed$.
\begin{enumerate}
\item If $0\in Z_{i}$ for all $0\leq i\leq N$ return $Failed$. Otherwise
let $d$ be the maximal $i$ such that $0\notin Z_{i}$.
\item Put \[
R:=\min{}_{d<i\leq N}(\frac{\underline{Z_{d}}}{N\overline{Z_{i}}})^{\frac{1}{i-d}}\]
If \[
R\leq\max{}_{0\leq i<d}(\frac{N\overline{Z_{i}}}{\underline{Z_{d}}})^{\frac{1}{d-i}}\]
return $Failed$.
\item Set $f_{c}=b_{d}z^{d}+\ldots+b_{0}$, where $b_{i}=\gamma(Z_{i})$
for $0\leq i\leq d$, and set \[p=-\max_{0\leq i\leq d}\log\rho(Z_{i})\]
\item Compute $(s_{1},\ldots,s_{d})=ApproximateRoots(f_{c},p)$.
\item Let $F=Z_{N}z^{N}+\ldots+Z_{0}$. For each $1\leq j\leq d$ and $0\leq i\leq N$
use complex disc arithmetic to compute $W_{i,j}:=\frac{F^{(i)}(s_{j})}{i!}$.
\item For each $1\leq j\leq d$ let $\tilde{k_{j}}$ be the smallest $k>0$
such that $\underline{W_{k,j}}\neq0$ and\[
r_{j}:=\max{}_{0\leq i<k}(\frac{N\overline{W_{i,j}}}{\underline{W_{k,j}}})^{\frac{1}{k-i}}<\min{}_{k<i\leq N}(\frac{\underline{W_{k,j}}}{N\overline{W_{i,j}}})^{\frac{1}{i-k}}\]
If there is no $k$ satisfying the condition return $Failed$.
\item Find the connected components of the union of $\triangle(s_{j},r_{j})$.
Let $J_{1},\ldots,J_{m}$ be the sets of indices $j$ corresponding
to the connected components. 
\item For each $1\leq l\leq m$, let $J_{l}=\{j_{l,1},\ldots j_{l,k_{l}}\}$.
If for some $j\in J_{l}$ $\tilde{k_{j}}\neq k_{l}$ return $Failed$.
Otherwise pick $j\in J_{l}$ with the minimal value of $r_{j}$, and
set $D_{l}:=\Delta(s_{j},r_{j})$. 
\item If $D_{l}\nsubseteq\Delta(0,R)$ for some $1\leq l\leq m$ return
$Failed$.
\item Return $(D_{1},\ldots,D_{m})$, $(k_{1},\ldots,k_{m})$, and $R$.
\end{enumerate}
\end{algorithm}
To show correctness of Algorithm \ref{IntervalRoots}, suppose that
the algorithm returned $(D_{1},\ldots,D_{m})$, $(k_{1},\ldots,k_{m})$,
and $R$. Let $a_{0}\in Z_{0},\ldots,a_{N}\in Z_{N}$ and $f=a_{N}z^{N}+\ldots+a_{0}$.
By Proposition \ref{RootMultProp}, and because the algorithm did
not fail in step $(2)$, $f$ has exactly $d$ roots in $\Delta(0,R)$.
The condition in step $(6)$ and Proposition \ref{RootMultProp} imply
that $f$ has exactly $k_{i}$ roots in the disc $D_{i}$. Since the
algorithm did not fail in step $(8)$, $k_{1}+\ldots+k_{m}=d$, and
hence $f$ has no roots in $\Delta(0,R)\setminus\bigcup_{i=1}^{m}D_{i}$.
Step $(7)$ guarantees that for any $1\leq i<j\leq m$, $D_{i}\cap D_{j}=\emptyset$.
Step $(9)$ ensures that $D_{i}\subseteq\Delta(0,R/2)$. 

Computation of sample points in CAD cells requires a representation
of vectors with algebraic number coordinates. The following gives
a recursive definition of root isolation data and of representation
of real algebraic vectors. Note that root isolation data provides
information about roots of $f_{k}$ not only at $u$, but also in
a neighbourhood of $u$ (point $(5)$ of the definition). This property
is crucial for computing cell adjacencies.
\begin{defn}
\label{RAV}\emph{$\Theta_{k}=((D_{1},\ldots,D_{m}),(k_{1},\ldots,k_{m}),R)$}
is \emph{root isolation data} for $f_{k}\in\mathbb{Q}[x_{1},\ldots,x_{k}]$
at $u=(\alpha_{1},\ldots,\alpha_{k-1})\in\mathbb{R}^{k-1}$ if
\begin{enumerate}
\item $D_{1},\ldots,D_{m}\in I_{2}(\mathbb{C})$, $k_{1},\ldots,k_{m}\in\mathbb{N}_{+}$,
and $R>0$,
\item $g_{k}=f_{k}(\alpha_{1},\ldots,\alpha_{k-1},x_{k})$ has a root of
multiplicity $k_{j}$ in $D_{j}$, for $1\leq j\leq m$, and has no
other roots,
\item for any $1\leq j\leq m$ $dbl(D_{j})\subseteq\Delta(0,R/2)$,
\item for any $j_{1}\neq j_{2}$ $dbl(D_{j_{1}})\cap dbl(D_{j_{2}})=\emptyset$
and if $dbl(D_{j_{1}})\cap\mathbb{R}\neq\emptyset$ then $conj(dbl(D_{j_{1}}))\cap dbl(D_{j_{2}})=\emptyset$,
\item if $W_{i}$ is the isolating disc of $\alpha_{i}$ for $1\leq i<k$,
and $\beta_{i}\in quad(W_{i})$ then $f_{k}(\beta_{1},\ldots,\beta_{k-1},x_{k})$
has exactly $k_{j}$ roots in $D_{j}$, multiplicities counted, and
has no roots in $\Delta(0,R)\setminus\bigcup_{j=1}^{m}D_{j}$.
\end{enumerate}
A \emph{real algebraic vector} $v=RAV(\Omega_{k})=(\alpha_{1},\ldots,\alpha_{k})\in\mathbb{R}^{k}$
is represented by \[\Omega_{k}=(\Omega_{k-1},f_{k},W_{k},\Theta_{k})\]
where
\begin{itemize}
\item $\Omega_{k-1}$ represents the real algebraic vector $\Pi(v)=RAV(\Omega_{k-1})=(\alpha_{1},\ldots,\alpha_{k-1})\in\mathbb{R}^{k-1}$, 
\item $f_{k}\in\mathbb{Q}[x_{1},\ldots,x_{k}]$ is a \emph{defining polynomial}
of $\alpha_{k}$,
\item $W_{k}\in I_{2}(\mathbb{C})$ is an \emph{isolating disc} of $\alpha_{k}$,
\item \emph{$\Theta_{k}=((D_{1},\ldots,D_{m}),(k_{1},\ldots,k_{m}),R)$}
is\emph{ }root isolation data\emph{ }for $f_{k}$\emph{ at $\Pi(v)$},
$W_{k}=D_{j_{0}}$ for some $1\leq j_{0}\leq m$, and $W_{k}\cap\mathbb{R}\neq\emptyset$.
\end{itemize}
Define $\rho(\Omega_{k})=\max(\rho(\Omega_{k-1}),\max_{1\leq j\leq m}\rho(D_{j}))$.

To complete the recursive definition let $\Omega_{0}=()$ be the representation
of the only element of $\mathbb{R}^{0}$. 

We will say that $\Omega_{k}^{\prime}=(\Omega_{k-1}^{\prime},f_{k},W_{k}^{\prime},\Theta_{k}^{\prime})$
is a \emph{refinement} of $\Omega_{k}$ if $RAV(\Omega_{k}^{\prime})=RAV(\Omega_{k})$,\emph{
$\Theta_{k}^{\prime}=((D_{1}^{\prime},\ldots,D_{m}^{\prime}),(k_{1},\ldots,k_{m}),R^{\prime})$,}
\emph{$\rho(D_{j}^{\prime})<\rho(D_{j})$} and $D_{j}^{\prime}\subseteq dbl(D_{j})$
for $1\leq j\leq m$, and $\Omega_{k-1}^{\prime}$ is a refinement
of $\Omega_{k-1}$.

For any $v=(a_{1},\ldots,a_{k})\in\mathbb{R}^{k}$ and $a\in\mathbb{R}$,
we will use notation $\Lambda(v)=a_{k}$ and $v\times a=(a_{1},\ldots,a_{k},a)\in\mathbb{R}^{k+1}$.\end{defn}
\begin{rem}
\label{RefRem}A refinement $\Omega_{k}^{\prime\prime}$ of a refinement
$\Omega_{k}^{\prime}$ of $\Omega_{k}$ is a refinement of $\Omega_{k}$.\end{rem}
\begin{proof}
With notations from Definition \ref{RAV}, let \[
\Omega_{k}^{\prime\prime}=(\Omega_{k-1}^{\prime\prime},f_{k},W_{k}^{\prime\prime},\Theta_{k}^{\prime\prime})\]
and\emph{ }\[
\Theta_{k}^{\prime\prime}=((D_{1}^{\prime\prime},\ldots,D_{m}^{\prime\prime}),(k_{1},\ldots,k_{m}),R^{\prime\prime})\]
By induction, it suffices to show \emph{$\rho(D_{j}^{\prime\prime})<\rho(D_{j})$}
and $D_{j}^{\prime\prime}\subseteq dbl(D_{j})$ for $1\leq j\leq m$.
$\rho(D_{j}^{\prime\prime})<\rho(D_{j})$ follows from \emph{$\rho(D_{j}^{\prime\prime})<\rho(D_{j}^{\prime})$
and} $\rho(D_{j}^{\prime})<\rho(D_{j})$.\emph{ }$D_{j}^{\prime\prime}\subseteq dbl(D_{j}^{\prime})$
and $D_{j}^{\prime}\subseteq dbl(D_{j})$ implies that $D_{j}$, $D_{j}^{\prime}$
and $D_{j}^{\prime\prime}$ contain the same root of $f_{k}(\alpha_{1},\ldots,\alpha_{k-1},x_{k})$.
Then \emph{$\rho(D_{j}^{\prime\prime})<\rho(D_{j})$ implies }$D_{j}^{\prime\prime}\subseteq dbl(D_{j})$. 
\end{proof}
The algorithms we introduce next take a \emph{working precision} argument.
A working precision $p$ is a positive integer. One can think of it
as the number of bits in floating-point numbers used in a numeric
approximation algorithm. However, we will not attach any specific
meaning to the working precision argument. Instead our algorithms
will satisfy certain properties as $p$ tends to infinity. For instance,
if we say that a certain quantity $\omega$ in the output of an algorithm
tends to zero as $p$ tends to infinity, it means that for any $\epsilon>0$
there exists $N>0$ such that if the working precision $p>N$ then
the algorithm will produce an output with $\omega<\epsilon$.

Let $v=(\alpha_{1},\ldots,\alpha_{k})$ be a real algebraic vector
and let \[
f\in\mathbb{Q}[x_{1},\ldots,x_{k},x_{k+1}]\]
be such that $f(\alpha_{1},\ldots,\alpha_{k},x_{k+1})$ does not vanish
identically. We will now describe an algorithm $AlgRoots_{k}$, with
$k\geq0$, which finds the root isolation data of $f$ at $v$ and
the real roots of $f(\alpha_{1},\ldots,\alpha_{k},x_{k+1})$. The
algorithm uses two subalgorithms $Refine_{k}$ and $ZeroTest_{k}$
that will be defined recursively in terms of $AlgRoots_{k-1}$. Given
\[
(\alpha_{1},\ldots,\alpha_{k})=RAV(\Omega_{k})\]
and a working precision $p>0$ $Refine_{k}$ computes a refinement
$\Omega_{k}^{\prime}$ of $\Omega_{k}$ such that as $p$ tends to
infinity $\rho(\Omega_{k}^{\prime})$ tends to zero. $ZeroTest_{k}$
decides whether $h(\alpha_{1},\ldots,\alpha_{k})$ is zero for a given
$h\in\mathbb{Q}[x_{1},\ldots,x_{k}]$.
\begin{algorithm}
\label{AlgRoots}($AlgRoots_{k}$)\\
Input: Real algebraic vector $v=(\alpha_{1},\ldots,\alpha_{k})=RAV(\Omega_{k})$,
where $k\geq0$, $f\in\mathbb{Q}[x_{1},\ldots,x_{k},x_{k+1}]$, such
that $f(\alpha_{1},\ldots,\alpha_{k},x_{k+1})$ does not vanish identically,
and a working precision $p>0$. \\
Output: Root isolation data $\Theta$ of $f$ at $v$, a refinement
$\Omega_{k}^{\prime}$ of $\Omega_{k}$, and real algebraic vectors
$v_{1}=RAV(\Omega_{k+1,1}),\ldots,v_{r}=RAV(\Omega_{k+1,r})\in\mathbb{R}^{k+1}$
such that $\Pi(v_{j})=RAV(\Omega_{k}^{\prime})$, for $1\leq j\leq r$,
$\Lambda(v_{1}),\ldots,\Lambda(v_{r})$ are all the real roots of
$f(\alpha_{1},\ldots,\alpha_{k},x_{k+1})$, and as $p$ tends to infinity
$\rho(\Omega_{k+1,j})$ tends to zero.
\begin{enumerate}
\item Let $f=a_{N}x_{k+1}^{N}+\ldots+a_{0}$. Find $d$ such that $a_{d}(\alpha_{1},\ldots,\alpha_{k})\neq0$
and \[
a_{d+1}(\alpha_{1},\ldots,\alpha_{k})=\ldots=a_{N}(\alpha_{1},\ldots,\alpha_{k})=0\]
using $ZeroTest_{k}$ if $k>0$. Set $g=a_{d}z^{d}+\ldots+a_{0}$.
\item Compute the principal subresultant coefficients $psc_{0},\ldots,psc_{n-1}$
of $g$ and $\frac{\partial g}{\partial z}$ with respect to $z$.
\item Find the largest integer $\mu\geq0$ such that \[
psc_{0}(\alpha_{1},\ldots,\alpha_{k})=\ldots=psc_{\mu-1}(\alpha_{1},\ldots,\alpha_{k})=0\]
using $ZeroTest_{k}$ if $k>0$.
\item Set $p^{\prime}=p$ and $\Omega_{k}^{\prime}=\Omega_{k}$.
\item If $k>0$ compute $\Omega_{k}^{\prime}=Refine_{k}(\Omega_{k}^{\prime},p^{\prime})$
and let $W_{1},\ldots,W_{k}$ be the isolating discs of $\alpha_{1},\ldots,\alpha_{k}$
in $\Omega_{k}^{\prime}$.
\item For $0\leq i\leq N$

\begin{enumerate}
\item if $a_{i}\in\mathbb{Q}$ compute $Z_{i}\in I_{2}(\mathbb{C})$ such
that $a_{i}\in Z_{i}$ and $\rho(a_{i})\leq2^{-p^{\prime}}$,
\item else compute $Z_{i}=a_{i}(quad(W_{1}),\ldots,quad(W_{k}))$ using
interval arithmetic.
\end{enumerate}
\item Compute $\Theta:=IntervalRoots(Z_{0},\ldots,Z_{N})$. 
\item If $\Theta=Failed$ double $p^{\prime}$ and go to step $(5)$. 
\item Let $\Theta=((D_{1},\ldots,D_{m}),(k_{1},\ldots,k_{m}),R)$. If $d-m>\mu$
or the conditions $(3)$ and $(4)$ of Definition \ref{RAV} are not
satisfied, double $p^{\prime}$ and go to step $(5)$. 
\item Let $(j_{1},\ldots,j_{r})$ be the set of indices for which $D_{j_{l}}\cap\mathbb{R}\neq\emptyset$.
For $1\leq l\leq r$ let $\Omega_{k+1,l}=(\Omega_{k}^{\prime},f,D_{j_{l}},\Theta)$
and $v_{l}=RAV(\Omega_{k+1,l})$.
\item Return $\Theta$, $\Omega_{k}^{\prime}$, and $v_{1},\ldots,v_{r}$.
\end{enumerate}
\end{algorithm}
\begin{proof}
To prove termination of the Algorithm \ref{AlgRoots} we need to show
that for sufficiently large $p^{\prime}$ the call to \emph{IntervalRoots}
in step $(7)$ succeeds and gives a result with $d-m=\mu$. The specification
of the algorithm $Refine_{k}$ implies that as $p^{\prime}$ tends
to infinity $\max_{1\leq i\leq k}\rho(W_{i})$ tends to zero. Hence
also $\max_{1\leq i\leq N}\rho(Z_{i})$ tends to zero. Therefore,
as $p^{\prime}$ tends to infinity, $\gamma(Z_{i})$ tends to $a_{i}(\alpha_{1},\ldots,\alpha_{k})$
and $\underline{Z_{i}}$ and $\overline{Z_{i}}$ tend to $|a_{i}(\alpha_{1},\ldots,\alpha_{k})|$
for all $0\leq i\leq N$. In particular, for sufficiently large $p^{\prime}$,
$0\notin Z_{i}$ iff $a_{i}(\alpha_{1},\ldots,\alpha_{k})\neq0$,
and hence $d$ in step $(1)$ of \emph{IntervalRoots} is the same
as $d$ computed in step $(1)$ of Algorithm \ref{AlgRoots}. Since\[
R:=\min{}_{d<i\leq N}(\frac{\underline{Z_{d}}}{N\overline{Z_{i}}})^{\frac{1}{i-d}}\]
tends to infinity and \[
\max{}_{0\leq i<d}(\frac{N\overline{Z_{i}}}{\underline{Z_{d}}})^{\frac{1}{d-i}}\]
tends to a finite constant, the call to \emph{IntervalRoots} does
not fail in in step $(2)$ for sufficiently large $p^{\prime}$. Let
$\sigma_{1},\ldots\sigma_{d}$ be the roots of $h(z)=g(\alpha_{1},\ldots,\alpha_{k},z)$,
each repeated as many times as its multiplicity. Let $s_{1},\ldots,s_{d}$
be the roots computed in step $(4)$ of \emph{IntervalRoots}. The
specification of \emph{ApproximateRoots} implies that, as $p^{\prime}$
tends to infinity, after a suitable reordering of roots, $s_{j}$
tends to $\sigma_{j}$ for each $1\leq j\leq d$. Hence for $W_{i,j}:=\frac{F^{(i)}(s_{j})}{i!}$
computed in step $(5)$ of \emph{IntervalRoots} $\gamma(W_{i,j})$
tends to $\frac{h^{(i)}(\sigma_{j})}{i!}$ and $\rho(W_{i,j})$ tends
to zero. Therefore, $\underline{W_{i,j}}$ and $\overline{W_{i,j}}$
tend to $|\frac{h^{(i)}(\sigma_{j})}{i!}|$. Hence, if $k_{j}$ is
the multiplicity of $\sigma_{j}$, \[
r_{j}:=\max{}_{0\leq i<k_{j}}(\frac{N\overline{W_{i,j}}}{\underline{W_{k_{j},j}}})^{\frac{1}{k_{j}-i}}\]
tends to zero and\[
\min{}_{k_{j}<i\leq N}(\frac{\underline{W_{k_{j},j}}}{N\overline{W_{i,j}}})^{\frac{1}{i-k_{j}}}\]
is bounded away from zero for sufficiently large $p^{\prime}$. Therefore,
for sufficiently large $p^{\prime}$, the condition in step $(6)$
of \emph{IntervalRoots} is satisfied by $k_{j}$. Note that if $s_{j}$
is closer to $\sigma_{j}$ than to other roots of $h$ then the condition
cannot be satisfied by any $k<k_{j}$. Otherwise, by Proposition \ref{RootMultProp},
$h$ would have exactly $k$ roots in $\Delta(s_{j},r_{j})$ which
is impossible since if $\sigma_{j}\notin\Delta(s_{j},r_{j})$ then
$\Delta(s_{j},r_{j})$ contains no roots of $h$ and else $\Delta(s_{j},r_{j})$
contains at least $k_{j}$ roots of $h$. Hence, for sufficiently
large $p^{\prime}$, step $(6)$ does not fail and $\tilde{k_{j}}=k_{j}$
for each$1\leq j\leq d$. Since $r_{j}$ tends to zero for $1\leq j\leq d$,
for sufficiently large $p^{\prime}$, $\Delta(s_{j_{1}},r_{j_{1}})$
and $\Delta(s_{j_{2}},r_{j_{2}})$ intersect iff $\sigma_{j_{1}}=\sigma_{j_{2}}$,
and hence step $(8)$ of \emph{IntervalRoots} does not fail. Since
$R$ tends to infinity, for sufficiently large $p^{\prime}$, step
$(9)$ of \emph{IntervalRoots} does not fail and the whole algorithm
succeeds. Since, for sufficiently large $p^{\prime}$, the discs returned
by \emph{IntervalRoots} correspond to distinct roots of $h$, $d-m=\mu$
in step $(9)$ of Algorithm \ref{AlgRoots}. Since $\rho(D_{i})$
tends to zero for $1\leq i\leq m$ and $R$ tends to infinity, for
sufficiently large $p^{\prime}$, we have $dbl(D_{i})\subseteq\Delta(0,R/2)$
for any $1\leq i\leq m$ and

\[
\rho(D_{i})<\frac{\min_{\sigma_{j_{1}}\neq\sigma_{j_{2}}}|\sigma_{j_{1}}-\sigma_{j_{2}}|}{16}\]
Therefore for any $i_{1}\neq i_{2}$ $dbl(D_{i_{1}})\cap dbl(D_{i_{2}})=\emptyset$
and if $\sigma$ is the root of $h$ in $D_{i_{1}}$ then either $\sigma\notin\mathbb{R}$
and $dbl(D_{i_{1}})\cap\mathbb{R}=\emptyset$ or $\sigma\in\mathbb{R}$
and $conj(dbl(D_{i_{1}}))\cap dbl(D_{i_{2}})=\emptyset$. Hence for
sufficiently large $p^{\prime}$, the conditions $(3)$ and $(4)$
of Definition \ref{RAV} are satisfied and the algorithm terminates.

Proposition 4.3 of \cite{S7} and correctness of Algorithm \ref{IntervalRoots}
imply that $\Omega_{k+1,l}$ satisfy the conditions $(1)$, $(2)$,
and $(5)$ of Definition \ref{RAV}, and the conditions $(3)$ and
$(4)$ are ensured by step $(9)$ of Algorithm \ref{AlgRoots}. Step
$(10)$ selects all isolating discs that intersect the real line,
hence $\Lambda(v_{1}),\ldots,\Lambda(v_{r})$ are all the real roots
of $f(\alpha_{1},\ldots,\alpha_{k},x_{k+1})$.

To show that $\rho(\Omega_{k+1,j})$ tends to zero as $p$ tends to
infinity, note that $p^{\prime}\geq p$, we have already shown that
$\rho(D_{i})$ tends to zero for $1\leq i\leq m$ as $p^{\prime}$
tends to infinity, and since $\Omega_{k}^{\prime}$ is computed by
$Refine_{k}$ with working precision $p^{\prime}$, $\rho(\Omega_{k}^{\prime})$
tends to zero as $p^{\prime}$ tends to infinity. \end{proof}
\begin{rem}
If $m=d$ in step $(10)$ of Algorithm \ref{AlgRoots} then \[
g(\alpha_{1},\ldots,\alpha_{k},z)\]
does not have multiple roots and computing the principal subresultant
coefficients is not necessary. Hence instead of computing the principal
subresultant coefficients in step $(2)$ it is sufficient to compute
them only when the algorithm reaches step $(9)$ for the first time
and $m<d$. 
\end{rem}
To complete the description of Algorithm \ref{AlgRoots} let us now
define the subalgorithms $Refine_{k}$ and $ZeroTest_{k}$.
\begin{algorithm}
\label{Refine}($Refine_{k}$)\\
Input: Real algebraic vector $(\alpha_{1},\ldots,\alpha_{k})=RAV(\Omega_{k})$,
where $k\geq1$, and a working precision $p>0$. \\
Output: A refinement $\Omega_{k}^{\prime}$ of $\Omega_{k}$ such
that as $p$ tends to infinity \textup{$\rho(\Omega_{k}^{\prime})$}
tends to zero.
\begin{enumerate}
\item Let $\Omega_{k}=(\Omega_{k-1},f_{k},W_{k},\Theta_{k})$, \[
\Theta_{k}=((D_{1},\ldots,D_{m}),(k_{1},\ldots,k_{m}),R)\]
and $W_{k}=D_{j_{0}}$. Set $v=RAV(\Omega_{k-1})$ and $p^{\prime}=p$.
\item Compute \[
(\Theta;\Omega_{k-1}^{\prime};v_{1},\ldots,v_{r})=AlgRoots_{k-1}(v,f_{k},p^{\prime})\]
where \emph{$\Theta=((D_{1}^{\prime},\ldots,D_{m}^{\prime}),(k_{1},\ldots,k_{m}),R^{\prime})$.}
\item \emph{If no reordering of indices in $\Theta$ yields} \emph{$\rho(D_{j}^{\prime})<\rho(D_{j})$}
and $D_{j}^{\prime}\subseteq dbl(D_{j})$ for $1\leq j\leq m$, double
$p^{\prime}$ and go to $(2)$.
\item Let $v_{j}=RAV(\Omega_{k}^{\prime})$ be such that the isolating disk
of $\Lambda(v_{j})$ is $D_{j_{0}}^{\prime}$. Return $\Omega_{k}^{\prime}$. 
\end{enumerate}
\end{algorithm}
Since $\rho(D_{j}^{\prime})$ tends to zero as $p^{\prime}$ tends
to infinity, for sufficiently large $p^{\prime}$ the condition in
step $(3)$ is satisfied for pairs $D_{j}$ and $D_{j}^{\prime}$
containing the same root of $f_{k}(\alpha_{1},\ldots,\alpha_{k-1},x_{k})$,
and hence the algorithm terminates. Correctness of Algorithm \ref{AlgRoots}
and the condition in step $(3)$ guarantee that $\Omega_{k}^{\prime}$
is a refinement of $\Omega_{k}$ and as $p$ tends to infinity $\rho(\Omega_{k}^{\prime})$
tends to zero.
\begin{algorithm}
\label{ZeroTest}($ZeroTest_{k}$)\\
Input: Real algebraic vector $(\alpha_{1},\ldots,\alpha_{k})=RAV(\Omega_{k})$,
where $k\geq1$, and $h\in\mathbb{Q}[x_{1},\ldots,x_{k}]$. \\
Output: $true$ if $h(\alpha_{1},\ldots,\alpha_{k})=0$ and $false$
otherwise. 
\begin{enumerate}
\item Let $\Omega_{k}=(\Omega_{k-1},f_{k},W_{k},\Theta_{k})$, \[
\Theta_{k}=((D_{1},\ldots,D_{m}),(k_{1},\ldots,k_{m}),R)\]
and $W_{k}=D_{j_{0}}$. Set $\mu=k_{j_{0}}$, $\Omega_{k}^{\prime}=\Omega_{k}$,
and set an initial value $p$ of working precision (e.g. to precision
that was used to compute $\Omega_{k}$).
\item Compute $\Omega_{k}^{\prime}=Refine_{k}(\Omega_{k}^{\prime},p)$.
Let $\Omega_{k}^{\prime}=(\Omega_{k-1}^{\prime},f_{k},W_{k}^{\prime},\Theta_{k}^{\prime})$.
Set $v=RAV(\Omega_{k-1}^{\prime})$.
\item Compute $(\Theta;\Omega_{k-1}^{\prime};v_{1},\ldots,v_{r})=AlgRoots_{k-1}(v,f_{k}h,p)$.
\item If $W_{k}^{\prime}$ intersects the isolating disc of $\Lambda(v_{j})$
for more than one $j$, double $p$ and go to step $(2)$.
\item Let $j$ be the only index for which $W_{k}^{\prime}$ intersects
the isolating disc $W$ of $\Lambda(v_{j})$. Let \emph{$\Theta=((\tilde{D_{1}},\ldots,\tilde{D_{m}}),(\tilde{k_{1}},\ldots,\tilde{k_{m}}),\tilde{R})$},
and $W=\tilde{D_{\tilde{j_{0}}}}$.
\item If $\tilde{k_{\tilde{j_{0}}}}>\mu$ return $true$ otherwise return
$false$.
\end{enumerate}
\end{algorithm}
Since as $p^{\prime}$ tends to infinity, $\rho(W_{k}^{\prime})$
and $\max_{1\leq j\leq r}\rho(\Lambda(v_{j}))$ tend to zero, for
sufficiently large $p^{\prime}$, $W_{k}^{\prime}$ intersects only
the isolating disc of $\Lambda(v_{j})=\alpha_{k}$, which proves termination
and correctness of the algorithm. 

Since $AlgRoots_{k}$ is defined for $k\geq0$ and $Refine_{k}$ and
$ZeroTest_{k}$ are defined for $k\geq1$, the recursive definition
of the algorithms is complete. 
\begin{rem}
$ZeroTest_{k}$ is defined here in terms of $AlgRoots_{k-1}$ for
simplicity of description. In practice, to decide whether \[
h(\alpha_{1},\ldots,\alpha_{k})=0\]
we can first evaluate $h$ at the isolating discs of $\alpha_{1}\ldots,\alpha_{k}$
using interval arithmetic. If the result does not contain zero then
\[
h(\alpha_{1},\ldots,\alpha_{k})\neq0\]
Otherwise, we isolate roots of \[
h_{\alpha}=h(\alpha_{1},\ldots,\alpha_{k-1},z)\]
and refine isolating discs of roots of \[
g_{\alpha}=g(\alpha_{1},\ldots,\alpha_{k-1},z)\]
and roots of $h_{\alpha}$ until either the isolating disc of $\alpha_{k}$
does not intersect any isolating discs of roots of $h_{\alpha}$ or
the number of intersecting isolating discs of roots of $g_{\alpha}$
and $h_{\alpha}$ agrees with the number of common roots of $g_{\alpha}$
and $h_{\alpha}$ computed by finding signs of principal subresultant
coefficients of $g_{\alpha}$ and $h_{\alpha}$ (see Proposition 4.4
of \cite{S7}). When the algorithm is used in CAD construction we
also use information about polynomials that are zero at the current
point that was collected during the construction (see \cite{S7},
Section 4.1).
\end{rem}

\section{Finding cell adjacencies}

Let $F$ be a well-based CAF in $x_{1},\ldots,x_{n}$. For simplicity
let us assume that $BP(F)=\{f_{1},\ldots,f_{n}\}$, where \[
f_{k}\in\mathbb{\mathbb{Q}}[x_{1},\ldots,x_{k}]\setminus\mathbb{\mathbb{Q}}[x_{1},\ldots,x_{k-1}]\]
This can be always achieved by multiplying all elements of \[
BP(F)\cap(\mathbb{\mathbb{Q}}[x_{1},\ldots,x_{k}]\setminus\mathbb{\mathbb{Q}}[x_{1},\ldots,x_{k-1}])\]
We can also assume that $f_{1},\ldots,f_{n}$ are square-free. 

The main algorithm \emph{CADAdjacency} (Algorithm \ref{CADAdjacency})
finds all pairs of adjacent cells $(C,C^{\prime})\in D(F)^{2}$ such
that $\dim C-\dim C^{\prime}=1$. By Proposition \ref{WellBasedProp},
to determine all cell adjacencies for a well-based CAF it is sufficient
to find all pairs of adjacent cells whose dimensions differ by one,
hence Algorithm \ref{CADAdjacency} is sufficient to fully solve the
cell adjacency problem for well-based CAF.

The algorithm first calls \emph{SamplePoints} (Algorithm \ref{SPT}),
which constructs a sample point $SPT(C)=(a_{1},\ldots,a_{k})\in\mathbb{R}^{k}$
in each cell $C\in D(\Pi_{k}(F))$, for $1\leq k\leq n$, and computes
root isolation data $RTS(C)$ for each cell $C\in D(\Pi_{k}(F))$,
for $1\leq k<n$. Let us describe the representation of sample points
and give the specification of root isolation data. Let $I=(i_{1},\ldots,i_{l})$
be the set of indices $1\leq i\leq k$ such that $\Pi_{i}(C)$ is
a section. For $i\notin I$, $a_{i}$ is a rational number and for
$i\in I$, $a_{i}$ is an algebraic number. To represent sample points
we will use combinations of rational vectors and algebraic vectors
defined as follows. Let $1\leq k\leq n$, let $0\leq l\leq k$, let
$I=\{i_{1},\ldots,i_{l}\}$, where $1\leq i_{1}<\ldots<i_{l}\leq k$,
and let $J=\{1,\ldots,k\}\setminus I=\{j_{1},\ldots,j_{k-l}\}$, where
$1\leq j_{1}<\ldots<j_{k-l}\leq k$. Let $v=(\alpha_{1},\ldots,\alpha_{l})=RAV(\Omega_{l})$
be a real algebraic vector and let $w=(q_{1},\ldots,q_{k-l})\in\mathbb{Q}^{k-l}$
be a rational vector. By $PT(v,w,I,J)$ we denote the point $(a_{1},\ldots,a_{k})\in\mathbb{R}^{k}$
such that $a_{i_{s}}=\alpha_{s}$ for $1\leq s\leq l$ and $a_{j_{t}}=q_{t}$
for $1\leq t\leq k-l$. Suppose $1\leq k<n$ and $SPT(C)=PT(v,w,I,J)$.
Let $f_{k+1}^{C}\in\mathbb{Q}[x_{i_{1}},\ldots,x_{i_{l}},x_{k+1}]$
denote $f_{k+1}$ with $x_{j_{t}}$ replaced by $a_{j_{t}}$ for $1\leq t\leq k-l$.
Then $RTS(C)$ computed by \emph{SamplePoints} is root isolation data
of $f_{k+1}^{C}$ at $v$.

Next \emph{CADAdjacency} calls \emph{AdjacencyPoints} (Algorithm \ref{ADP})
which, for $1\leq k\leq n$, and for each pair of adjacent cells $(C,C^{\prime})$
of $D(\Pi_{k}(F))$ with $\dim C^{\prime}=\dim C-1$, constructs a
point $ADP(C,C^{\prime})\in C$ which satisfies the following condition.
\begin{condition}
\label{ADPCond}If $SPT(C^{\prime})=(a_{1},\ldots,a_{k})$ and $ADP(C,C^{\prime})=(b_{1},\ldots,b_{k})$
then 
\begin{itemize}
\item for each $1\leq i\leq k$ if $a_{i}$ is a root of $f_{i}(a_{1},\ldots,a_{i-1},x_{i})$
with isolating disc $W_{i}$ then $b_{i}\in dbl(W_{i})$, 
\item if $a_{i}$ is a rational number between roots of $f_{i}(a_{1},\ldots,a_{i-1},x_{i})$
then $b_{i}=a_{i}$. 
\end{itemize}
\end{condition}
Finally, \emph{CADAdjacency} returns the pairs of cells $(C,C^{\prime})\in D(F)^{2}$
for which $ADP(C,C^{\prime})$ is defined.
\begin{algorithm}
\label{CADAdjacency}(CADAdjacency)\\
Input: A well-based CAF $F$ in $x_{1},\ldots,x_{n}$ with $BP(F)=\{f_{1},\ldots,f_{n}\}$,
where $f_{k}\in\mathbb{\mathbb{Q}}[x_{1},\ldots,x_{k}]\setminus\mathbb{\mathbb{Q}}[x_{1},\ldots,x_{k-1}]$.
\\
Output: The set $A$ of all pairs of adjacent cells $(C,C^{\prime})\in D(F)^{2}$
such that $\dim C-\dim C^{\prime}=1$.
\begin{enumerate}
\item Compute $(SPT,RTS)=SamplePoints(F)$. 
\item Compute $ADP=AdjacencyPoints(F,SPT,RTS)$.
\item Return the set of all pairs of cells $(C,C^{\prime})\in D(F)^{2}$
such that $ADP(C,C^{\prime})$ is defined.
\end{enumerate}
\end{algorithm}

\begin{algorithm}
\label{SPT}(SamplePoints)\\
Input: A well-based CAF $F$ in $x_{1},\ldots,x_{n}$ with $BP(F)=\{f_{1},\ldots,f_{n}\}$,
where $f_{k}\in\mathbb{\mathbb{Q}}[x_{1},\ldots,x_{k}]\setminus\mathbb{\mathbb{Q}}[x_{1},\ldots,x_{k-1}]$.
\\
Output: $SPT$ and $RTS$ such that
\begin{itemize}
\item for $1\leq k\leq n$ and for each cell $C$ of $D(\Pi_{k}(F))$, $SPT(C)$
is a sample point in $C$,
\item for $1\leq k<n$ and for each cell $C$ of $D(\Pi_{k}(F))$, $RTS(C)$
is root isolation data for $C$.\end{itemize}
\begin{enumerate}
\item Set an initial value $p$ of working precision.
\item Compute $(\Theta;();v_{1},\ldots,v_{r})=AlgRoots_{0}((),f_{1},p)$.
We have \[\Theta=((D_{1},\ldots,D_{m}),(k_{1},\ldots,k_{m}),R)\]
\item Pick rational numbers $-R<q_{1}<\Lambda(v_{1})<q_{2}<\ldots<q_{r}<\Lambda(v_{r})<q_{r+1}<R$
such that $q_{i}\notin\bigcup_{j=1}^{m}dbl(D_{j})$.
\item For $1\leq i\leq r$, let $C$ be the $i$-th $\{f_{1}\}$-section.
Set $SPT(C)=PT(v_{i},(),\{1\},\{\})$.
\item For $1\leq i\leq r+1$, let $C$ be the $i$-th $\{f_{1}\}$-sector.
Set $SPT(C)=PT((),(q_{i}),\{\},\{1\})$.
\item For $1\leq k<n$ and for each cell $C$ of $D(\Pi_{k}(F))$:

\begin{enumerate}
\item Let $(a_{1},\ldots,a_{k})=PT(v,w,I,J)=SPT(C)$.
\item Let $I=\{i_{1},\ldots,i_{l}\}$, $J=\{j_{1},\ldots,j_{k-l}\}$, and
let $f_{k+1}^{C}$ be $f_{k+1}$ with $x_{j_{t}}$ replaced by $a_{j_{t}}$
for $1\leq t\leq k-l$. 
\item Compute $(\Theta;\Omega_{l}^{\prime};v_{1},\ldots,v_{r})=AlgRoots_{l}(v,f_{k+1}^{C},p)$.
We have \[\Theta=((D_{1},\ldots,D_{m}),(k_{1},\ldots,k_{m}),R)\]
\item Set $RTS(C)=\Theta$ and replace representations of algebraic vectors
in $SPT(\Pi_{i}(C))$ for $i\leq k$ with their refinements that appear
in $\Omega_{l}^{\prime}$.
\item Pick rational numbers $-R<q_{1}<\Lambda(v_{1})<q_{2}<\ldots<q_{r}<\Lambda(v_{r})<q_{r+1}<R$
such that $q_{i}\notin\bigcup_{j=1}^{m}dbl(D_{j})$, and let $w_{i}=w\times q_{i}$
for $1\leq i\leq r+1$.
\item For $1\leq i\leq r$, let $S$ be the $i$-th $\{f_{k+1}\}$-section
over $C$. Set \[SPT(S)=PT(v_{i},w,I\cup\{k+1\},J)\]
\item For $1\leq i\leq r+1$, let $S$ be the $i$-th $\{f_{k+1}\}$-sector
over $C$. Set \[SPT(S)=PT(v,w_{i},I,J\cup\{k+1\})\]
\end{enumerate}
\item Return $SPT$ and $RTS$.
\end{enumerate}
\end{algorithm}

\begin{algorithm}
\label{ADP}(AdjacencyPoints)\\
Input: A well-based CAF $F$ in $x_{1},\ldots,x_{n}$ with $BP(F)=\{f_{1},\ldots,f_{n}\}$,
where $f_{k}\in\mathbb{\mathbb{Q}}[x_{1},\ldots,x_{k}]\setminus\mathbb{\mathbb{Q}}[x_{1},\ldots,x_{k-1}]$,
$SPT$ and $RTS$ as in the output of Algorithm \ref{SPT}. \\
Output: $ADP$ such that for $1\leq k\leq n$ and for each pair
of adjacent cells $(C,C^{\prime})$ of $D(\Pi_{k}(F))$ with $\dim C^{\prime}=\dim C-1$,
$ADP(C,C^{\prime})$ is a point in $C$ satisfying Condition \ref{ADPCond}.
\begin{enumerate}
\item Let $r$ be the number of real roots of $f_{1}$. For $1\leq i\leq r+1$:

\begin{enumerate}
\item Let $C$ be the $i-th$ $\{f_{1}\}$-sector.
\item If $i>1$ let $C^{\prime}$ be the $i-1$-st $\{f_{1}\}$-section.
We have $SPT(C^{\prime})=PT(v,(),\{1\},\{\})$, $v=(\alpha)=RAV(\Omega_{1})$,
and $\Omega_{1}=((),f_{1},W,\Theta)$. Let $q\in dbl(W)\cap\mathbb{Q}$
and $q>\alpha$. Set $ADP(C,C^{\prime})=(q)$.
\item If $i\leq r$ let $C^{\prime}$ be the $i$-th $\{f_{1}\}$-section.
We have \[
SPT(C^{\prime})=PT(v,(),\{1\},\{\})\]
$v=(\alpha)=RAV(\Omega_{1})$, and \[
\Omega_{1}=((),f_{1},W,\Theta)\]
Let $q\in dbl(W)\cap\mathbb{Q}$ and $q<\alpha$. Set $ADP(C,C^{\prime})=(q)$.
\end{enumerate}
\item For $1\leq k<n$ and for each non-zero-dimensional cell $C$ of $D(\Pi_{k}(F))$:

\begin{enumerate}
\item Let $SPT(C)=PT(v,w,I,J)$ and let $r$ be the number of real roots
of $f_{k+1}$ over $C$. For $1\leq i\leq r+1$:

\begin{enumerate}
\item Let $S$ be the $i-th$ $\{f_{k+1}\}$-sector over $C$.
\item If $i>1$ let $S^{\prime}$ be the $i-1$-st $\{f_{k+1}\}$-section
over $C$. We have $SPT(S^{\prime})=PT(v',w,I\cup\{k+1\},J)$, $v^{\prime}=v\times\alpha=RAV(\Omega_{l+1})$,
and $\Omega_{l+1}=(v,g,W,\Theta)$. Let $q\in dbl(W)\cap\mathbb{Q}$
and $q>\alpha$. Set $ADP(S,S^{\prime})=PT(v,w^{\prime},I,J\cup\{k+1\})$,
where $w^{\prime}=w\times q$.
\item If $i\leq r$ let $S^{\prime}$ be the $i$-th $\{f_{k+1}\}$-section
over $C$. We have $SPT(S^{\prime})=PT(v',w,I\cup\{k+1\},J)$, $v^{\prime}=v\times\alpha=RAV(\Omega_{l+1})$,
and $\Omega_{l+1}=(v,g,W,\Theta)$. Let $q\in dbl(W)\cap\mathbb{Q}$
and $q<\alpha$. Set \[
ADP(S,S^{\prime})=PT(v,w^{\prime},I,J\cup\{k+1\})\]
where $w^{\prime}=w\times q$.
\end{enumerate}
\item For each cell $C^{\prime}$ of $D(\Pi_{k}(F))$ adjacent to $C$ and
such that $\dim C^{\prime}=\dim C-1$:

\begin{enumerate}
\item Let $(a_{1},\ldots,a_{k})=PT(v,w,I,J)=ADP(C,C^{\prime})$ and let
\[RTS(C^{\prime})=((D_{1},\ldots,D_{m}),(k_{1},\ldots,k_{m}),R)\]
\item Let $S_{1}^{\prime},\ldots,S_{s}^{\prime}$ be the$\{f_{k+1}\}$-sections
over $C^{\prime}$, and let $W_{j}^{\prime}$ be the isolating disc
of $\Lambda(SPT(S_{j}^{\prime}))$ for $1\leq j\leq s$.
\item Let $I=\{i_{1},\ldots,i_{l}\}$, $J=\{j_{1},\ldots,j_{k-l}\}$, and
let $g\in\mathbb{Q}[x_{i_{1}},\ldots,x_{i_{l}},x_{k+1}]$ be $f_{k+1}$
with $x_{j_{t}}$ replaced by $a_{j_{t}}$ for $1\leq t\leq k-l$. 
\item Compute $(\Theta;\Omega_{l}^{\prime};v_{1},\ldots,v_{r})=AlgRoots_{l}(v,g,p)$. 
\item For $1\leq i\leq r$ refine the isolating disc $W_{i}$ of $\Lambda(v_{i})$
until it is contained in one of $dbl(W_{1}^{\prime}),\ldots,dbl(W_{s}^{\prime})$
or $W_{i}\cap\Delta(0,R/2)=\emptyset$. Let $S$ be the $i$-th $\{f_{k+1}\}$-section
over $C$. If $W_{i}\subseteq dbl(W_{j}^{\prime})$, set $ADP(S,S_{j}^{\prime})=PT(v_{i},w,I\cup\{k+1\},J)$,
and set $L(i)=S_{j}^{\prime}$. Otherwise if $\Lambda(v_{i})<0$ set
$L(i)=-\infty$ else set $L(i)=\infty$.
\item Set $L(0)=-\infty$ and $L(r+1)=\infty$.
\item For $1\leq i\leq r+1$, let $S$ be the $i$-th $\{f_{k+1}\}$-sector
over $C$. For each $\{f_{k+1}\}$-sector $S^{\prime}$ over $C^{\prime}$
that lies between $L(i-1)$ and $L(i)$ put $u=w\times\Lambda(SPT(S^{\prime}))$
and set $ADP(S,S^{\prime})=PT(v,u,I,J\cup\{k+1\})$.
\end{enumerate}
\end{enumerate}
\item Return $ADP$.
\end{enumerate}
\end{algorithm}
\begin{proof}
Let us now prove correctness of Algorithm \ref{CADAdjacency}. The
working precision $p$ set in step $(1)$ of \emph{SamplePoints} is
used in calls to \emph{AlgRoots}. Since \emph{AlgRoots} raises precision
as needed to reach its goals, $p$ is just an initial value and can
be set arbitrarily e.g. to the number of bits in a double precision
number. Steps $(2)$-$(6)$ construct sample points $SPT(C)$ is all
cells of $D(F)$, starting with sample points in cells of $D(\Pi_{1}(F))$,
and then extending them to sample points in $D(\Pi_{k}(F))$ one coordinate
at a time. An important fact to note is that isolating discs in the
representations of already constructed sample points may change during
the execution of step $(6)$. Namely, in step $(6d)$ the isolating
discs of the coordinates of the sample points $SPT(\Pi_{i}(C))$ for
all projections of the cell $C$ are replaced with their refinements
that were computed in the process of isolating the roots of $f_{k+1}^{C}$.
In particular, for any cell $C\in D(F)$ if $SPT(C)=(a_{1},\ldots,a_{n})$
then for any $1\leq k\leq n$ $SPT(\Pi_{k}(C))=(a_{1},\ldots,a_{k})$
and the isolating discs that appear in the representations of any
algebraic coordinate $a_{i}$ in $SPT(C)$ and in $SPT(\Pi_{k}(C))$
are equal. Note however, that after \emph{SamplePoints} is finished
the representations of $SPT(C)$ are fixed. 

In step $(1)$ of \emph{AdjacencyPoints} for each pair of adjacent
cells $(C,C^{\prime})$ of $D(\Pi_{1}(F))$ with $\dim C^{\prime}=\dim C-1$
the algorithm constructs a point \[
ADP(C,C^{\prime})=(q)\in C\]
such that if $SPT(C^{\prime})=(\alpha)$ and $W$ is the isolating
disc of $\alpha$ then $q\in dbl(W)$. At the start of each iteration
of the loop in step $(2)$ the algorithm has already constructed a
point $ADP(C,C^{\prime})$ for each pair of adjacent cells $(C,C^{\prime})$
of $D(\Pi_{k}(F))$ with $\dim C^{\prime}=\dim C-1$. The points satisfy
Condition \ref{ADPCond}. Steps $(2a)$ and $(2b)$ construct points
$ADP(S,S^{\prime})$ for each pair of adjacent cells $(S,S^{\prime})$
of $D(\Pi_{k+1}(F))$ with $\dim S^{\prime}=\dim S-1$. It is clear
that the constructed points satisfy Condition \ref{ADPCond}. What
we need to show is that the construction will always succeed, pairs
of cells $(S,S^{\prime})$ for which $ADP(S,S^{\prime})$ is constructed
are adjacent, and $ADP(S,S^{\prime})$ is constructed for every pair
of adjacent cells $(S,S^{\prime})$ of $D(\Pi_{k+1}(F))$ with $\dim S^{\prime}=\dim S-1$.
Step $(2a)$ constructs $ADP(S,S^{\prime})$ for every pair of adjacent
cells from a stack over the same cell $C$. Note that in step $(2a)$
we have $\alpha\in W$ and $(dbl(W)\setminus W)\cap\mathbb{R}$ consists
of two intervals, one on each side of $\alpha$, hence we can pick
rational numbers $q\in dbl(W)$ with $q>\alpha$ or $q<\alpha$. If
cells $S$ and $S^{\prime}$ from stacks over different cells $C$
and $C^{\prime}$ are adjacent and $\dim S^{\prime}=\dim S-1$ then
$C$ and $C^{\prime}$ must be adjacent and, by Proposition \ref{WellBasedProp},
$\dim C^{\prime}=\dim C-1$ and $S$ is a section iff $S^{\prime}$
is a section. This shows that step $(2)$ constructs $ADP(S,S^{\prime})$
for every pair of adjacent cells $(S,S^{\prime})$ of $D(\Pi_{k+1}(F))$
with $\dim S^{\prime}=\dim S-1$. 

Let us prove that the construction in step $(2b)$ will always succeed
and pairs of cells $(S,S^{\prime})$ for which $ADP(S,S^{\prime})$
is constructed are adjacent. With notation of step $(2b)$, let \[
(a_{1},\ldots,a_{k})=PT(v,w,I,J)=ADP(C,C^{\prime})\]
and \[
(a_{1}^{\prime},\ldots,a_{k}^{\prime})=PT(v^{\prime},w^{\prime},I^{\prime},J^{\prime})=SPT(C^{\prime})\]
Note that $J^{\prime}\subseteq J$ and $a_{j}^{\prime}=a_{j}$ for
$j\in J^{\prime}$. Let $g^{\prime}=f_{k+1}^{C^{\prime}}$ be $f_{k+1}$
with $x_{j}$ replaced by $a_{j}^{\prime}=a_{j}$ for $j\in J^{\prime}$.
Then $g$ is equal to $g^{\prime}$ with $x_{j}$ replaced by $a_{j}$
for $j\in J\cap I^{\prime}$. For $i\in I^{\prime}$ let $U_{i}^{\prime}$
be the isolating disk of $a_{i}^{\prime}$ in $v^{\prime}$ and let
$U_{i}$ be the isolating disk of $a_{i}^{\prime}$ in the representation
of $\Pi(v^{\prime})$ with which $RTS(C^{\prime})$ was computed.
Note that, by Remark \ref{RefRem}, the current representation of
$v^{\prime}$ is a refinement of the representation with which $RTS(C^{\prime})$
was computed, hence $U_{i}^{\prime}\subseteq dbl(U_{i})$. Hence,
$a_{i}\in dbl(U_{i}^{\prime})$ and $a_{i}\in quad(U_{i})$. By the
condition $(5)$ of Definition \ref{RAV}, for each $1\leq i\leq r$
either $\Lambda(v_{i})$ belongs to one of $W_{1}^{\prime},\ldots,W_{s}^{\prime}$
or $\Lambda(v_{i})\notin\Delta(0,R)$. Therefore we can refine the
isolating disc $W_{i}$ of $\Lambda(v_{i})$ so that it is contained
in one of $dbl(W_{1}^{\prime}),\ldots,dbl(W_{s}^{\prime})$ or $W_{i}\cap\Delta(0,R/2)=\emptyset$.
In the former case the $i$-th $\{f_{k+1}\}$-section over $C$ is
adjacent to the $j$-th $\{f_{k+1}\}$-section over $C^{\prime}$,
in the latter case the $i$-th $\{f_{k+1}\}$-section over $C$ tends
to infinity whose sign is determined by the sign of $\Lambda(v_{i})$.
This shows that sections $(S,S^{\prime})$ for which $ADP(S,S^{\prime})$
is constructed are adjacent. Finally, let $S$ and $S^{\prime}$ be
sectors over $C$ and $C^{\prime}$ defined in step $(2b(vii))$.
Then, by construction in step $(6e)$ of \emph{SamplePoints}, $q=\Lambda(SPT(S^{\prime}))\notin\bigcup_{j=1}^{m}dbl(D_{j})$,
and since $W_{j}^{\prime}\subseteq dbl(D_{j})$ (possibly after reordering
of indices), $q\notin\bigcup_{j=1}^{s}W_{j}^{\prime}$. Moreover,
$-R<q<R$. Since for each $1\leq i\leq r$ either $\Lambda(v_{i})$
belongs to one of $W_{1}^{\prime},\ldots,W_{s}^{\prime}$ or $\Lambda(v_{i})\notin\Delta(0,R)$,
the point $PT(v,u,I,J\cup\{k+1\})$ defined in step $(2b(vii))$ belongs
to $S$. If $ADP(S,S^{\prime})$ is constructed in step $(2b(vii))$
then $S^{\prime}$ is a sector that lies between sections adjacent
to the sections bounding $S$, hence $S$ and $S^{\prime}$ are adjacent. 
\end{proof}

\section{Empirical Results}

An algorithm computing CAD cell adjacencies has been implemented in
C, as a part of the kernel of \emph{Mathematica}. The implementation
takes a quantified system of polynomial equations and inequalities
$S$ and uses \emph{Mathematica} multi-algorithm implementation of
CAD to compute a CAF $F$ such that $D(F)$ is a CAD of $\mathbb{R}^{n}$
consistent with the solution set $A$ of $S$. If $F$ is well-based
the implementation uses Algorithm \ref{ADP} to find the cell adjacencies.
The implementation is geared towards solving a specified topological
problem, e.g. finding the boundary or the connected components of
$A$, hence it avoids computing cell adjacencies for cells that are
known not to belong to the closure of $A$. The current implementation
also works for non-well-based problems in $\mathbb{R}^{3}$ using
ideas from \cite{ACM2} to extend Algorithm \ref{ADP}. 

The experiments have been conducted on a Linux laptop with a $4$-core
$2.7$ GHz Intel Core i7 processor and $16$ GB of RAM. The reported
CPU time is a total from all cores used. For each example we give
three timings. $t_{CAD}$ is the computation time of constructing
a CAF consistent with the solution set the input system. $t_{SP}$
is the time of refining the CAD to a $BP(F)$-invariant CAD of$\mathbb{R}^{n}$
and of constructing sample points in the CAD cells (steps $(1)$-$(6)$
of Algorithm \ref{ADP}). Our implementation refines the CAD while
constructing sample points, which is why we cannot give separate timings.
The third timing, $t_{ADJ}$ is the time of computing cell adjacency
information (steps $(7)$-$(9)$ of Algorithm \ref{ADP}). We also
report the dimension $\dim$ of the embedding space, the number $N_{CELL}$
of cells in the CAD of $A$, the number $N_{ADJ}$ of computed pairs
of adjacent cells whose dimensions differ by one, and the number $N_{CC}$
of connected components of $A$.
\begin{example}
Find cell adjacencies for a CAD of the union of two unit balls in
$\mathbb{R}^{n}$\[
x_{1}^{2}+\ldots+x_{n}^{2}\leq1\vee(x_{1}-1)^{2}+\ldots+(x_{n}-1)^{2}\leq1\]
Note that for $n\leq3$ the balls have full-dimensional intersection,
for $n=4$ they touch at one point, and for $n>4$ they are disjoint.

\begin{table}

\caption{Union of two balls in $\mathbb{R}^{n}$}
\begin{tabular}{|c|c|c|c|c|c|c|}
\hline 
$\dim$ & $t_{CAD}$ & $t_{SP}$ & $t_{ADJ}$ & $N_{CELL}$ & $N_{ADJ}$ & $N_{CC}$\tabularnewline
\hline
\hline 
$2$ & $0.018$ & $0.004$ & $0.001$ & $21$ & $42$ & $1$\tabularnewline
\hline 
$3$ & $0.100$ & $0.033$ & $0.024$ & $179$ & $718$ & $1$\tabularnewline
\hline 
$4$ & $0.489$ & $0.175$ & $0.112$ & $521$ & $3898$ & $1$\tabularnewline
\hline 
$5$ & $1.42$ & $0.773$ & $0.352$ & $954$ & $11910$ & $2$\tabularnewline
\hline 
$6$ & $44.6$ & $24.8$ & $8.92$ & $14050$ & $251758$ & $2$\tabularnewline
\hline
\end{tabular}

\end{table}

\end{example}

\begin{example}
Here we used modified versions of examples from Wilson's benchmark
set \cite{W2} (version 4). Of the $77$ examples we selected $63$
that involved at least $3$ variables and we used quantifier-free
versions of the examples. In $21$ of the examples the system was
not well-based and involved more than $3$ variables, hence our algorithm
did not apply. $7$ examples did not finish in $600$ seconds. Of
the $35$ examples for which our implementation succeeded, $29$ were
well-based and $6$ were not well-based and in $\mathbb{R}^{3}$.
On average, $t_{CAD}$ took $55\%$ of the total time, $t_{SP}$ took
$34\%$, and $t_{ADJ}$ took $11\%$. Five examples with the largest
number of cells are given in Table \ref{wilson}. All but the third
example are well-based.

\begin{table}

\caption{\label{wilson}Wilson's benchmark}
\begin{tabular}{|c|c|c|c|c|c|c|c|}
\hline 
Ex \# & $\dim$ & $t_{CAD}$ & $t_{SP}$ & $t_{ADJ}$ & $N_{CELL}$ & $N_{ADJ}$ & $N_{CC}$\tabularnewline
\hline
\hline 
$2.13$ & $4$ & $0.109$ & $0.263$ & $0.210$ & $3104$ & $10576$ & $1$\tabularnewline
\hline 
$2.16$ & $3$ & $3.06$ & $2.65$ & $1.27$ & $2811$ & $37416$ & $1$\tabularnewline
\hline 
$6.1$ & $3$ & $0.768$ & $0.794$ & $0.312$ & $2774$ & $8926$ & $2$\tabularnewline
\hline 
$5.10$ & $4$ & $14.9$ & $11.2$ & $4.01$ & $2256$ & $63190$ & $1$\tabularnewline
\hline 
$6.6$ & $6$ & $14.6$ & $7.85$ & $3.52$ & $2128$ & $76360$ & $1$\tabularnewline
\hline
\end{tabular}%
\end{table}

\end{example}

\begin{example}
Here we took the $32$ 3D solids that appear in \emph{Mathematica}
\noun{SolidData} and intersected each of them with the solution set
of $9(x+y+z)^{2}>z^{2}+1$. All $32$ examples were well-based and
in all our implementation succeeded. On average, $t_{CAD}$ took $37\%$
of the total time, $t_{SP}$ took $47\%$, and $t_{ADJ}$ took $16\%$.
The five solids which resulted in the largest number of cells are:
\begin{enumerate}
\item Steinmetz 6-solid\begin{eqnarray*}
 & 2x^{2}+(y-z)^{2}\leq2\wedge2x^{2}+(y+z)^{2}\leq2\wedge\\
 & 2y^{2}+(x-z)^{2}\leq2\wedge2y^{2}+(x+z)^{2}\leq2\wedge\\
 & (x-y)^{2}+2z^{2}\leq2\wedge(x+y)^{2}+2z^{2}\leq2\end{eqnarray*}

\item Sphericon\begin{eqnarray*}
 & (x^{2}+y^{2}\leq(|z|-1)^{2}\wedge x\geq0\wedge-1\leq z\leq1)\vee\\
 & (x^{2}+z^{2}\leq(|y|-1)^{2}\wedge x\leq0\wedge-1\leq y\leq1)\end{eqnarray*}

\item Steinmetz 4-solid\begin{eqnarray*}
 & x^{2}+y^{2}\leq1\wedge9x^{2}+y^{2}+8z^{2}\leq9+164/29yz\wedge\\
 & 3x^{2}+284/41xy+7y^{2}+82/29yz+8z^{2}\leq9+49/10xz\wedge\\
 & 3x^{2}+7y^{2}+49/10xz+82/29yz+8z^{2}\leq9+284/41xy\end{eqnarray*}

\item Solid capsule\begin{eqnarray*}
 & x^{2}+y^{2}+(z-1/2)^{2}\leq1\vee x^{2}+y^{2}+(z+1/2)^{2}\leq1\vee\\
 & -1/2\leq z\leq1/2\wedge x^{2}+y^{2}\leq1\end{eqnarray*}

\item Reuleaux tetrahedron\begin{eqnarray*}
 & x^{2}+y^{2}+(19/31+z)^{2}\leq1\wedge\\
 & (x-15/26)^{2}+y^{2}+(z-9/44)^{2}\leq1\wedge\\
 & (x+11/38)^{2}+(y-1/2)^{2}+(z-9/44)^{2}\leq1\wedge\\
 & (x+11/38)^{2}+(y+1/2)^{2}+(z-9/44)^{2}\leq1\end{eqnarray*}

\end{enumerate}
The details are given in Table \ref{solids}. 

\begin{table}
\caption{\label{solids}Intersections of solids with $9(x+y+z)^{2}>z^{2}+1$. }
\begin{tabular}{|c|c|c|c|c|c|c|c|}
\hline 
Ex \# & $\dim$ & $t_{CAD}$ & $t_{SP}$ & $t_{ADJ}$ & $N_{CELL}$ & $N_{ADJ}$ & $N_{CC}$\tabularnewline
\hline
\hline 
$1$ & $3$ & $254$ & $614$ & $189$ & $156688$ & $4320078$ & $2$\tabularnewline
\hline 
$2$ & $3$ & $59.5$ & $82.6$ & $27.0$ & $54256$ & $767462$ & $2$\tabularnewline
\hline 
$3$ & $3$ & $52.1$ & $78.2$ & $20.1$ & $24476$ & $461614$ & $2$\tabularnewline
\hline 
$4$ & $3$ & $8.11$ & $5.42$ & $3.06$ & $17152$ & $84162$ & $2$\tabularnewline
\hline 
$5$ & $3$ & $47.9$ & $53.5$ & $15.1$ & $11756$ & $349976$ & $2$\tabularnewline
\hline
\end{tabular}%
\end{table}
\end{example}

\bibliographystyle{abbrv}

\end{document}